\newcommand{\col}{\mbox{col }}
\newtheorem{Remark}{Remark}
\newtheorem{Corollary}{Corollary}
\newenvironment{proof}{\noindent{\em Proof:\/}}{\hfill $\Box$\par}
\newtheorem{Theorem}{Theorem}
\newtheorem{Lemma}{Lemma}
\newtheorem{Assumption}{Assumption}
\newtheorem{Property}{Property}
\newcommand{\re}{\textnormal{Re}}
\newcommand{\im}{\textnormal{Im}}
\newcommand{\myb}{}
\newcommand{\myr}{}
\begin{document}

\title{Distributed Time- and Event-Triggered Observers for Linear Systems: Non-Pathological Sampling and Inter-Event Dynamics}
\author{Shimin~Wang, Zhan Shu, Tongwen Chen\thanks{This work was supported by NSERC and an Alberta EDT Major Innovation Fund. Shimin~Wang, Zhan Shu and Tongwen Chen are with the Department of Electrical and Computer Engineering, University of Alberta, Edmonton, Alberta,  T6G~1H9, Canada. Emails: {\tt\small shimin.wang@queensu.ca, zshu1@ualberta.ca, tchen@ualberta.ca}}}
\maketitle
\begin{abstract}
For an autonomous linear time-invariant (LTI) system, a distributed observer with time-triggered periodic observations and event-triggered communication is proposed to estimate the state of the system.
It is shown that the sampling period is critical for the existence of desirable observers. A necessary and sufficient condition is established to give all feasible sampling periods that lead to convergent error dynamics and characterize delicate relationships among sampling periods, topologies, and system matrices.
An event-triggering mechanism based on locally sampled data is designed to regulate the communication among agents, and the convergence of the estimation errors under the mechanism holds for a class of positive and convergent triggering functions, which include the commonly used exponential function as a special case.
The mixed time- and event-triggered architecture naturally excludes the existence of Zeno behavior as the system updates at discrete instants.
When the triggering function is bounded by exponential functions, analytical characterization of the relationship among sampling, event triggering, and inter-event behaviour is established.
%
%
Finally, several examples are provided to illustrate the effectiveness and merits of the theoretical results.
\end{abstract}
\begin{IEEEkeywords}Pathological Sampling, Time-triggered observation, Event-based communication, Formation control, Distributed systems.
\end{IEEEkeywords}

\section{Introduction}
{\myr Cooperative control of distributed systems has been extensively studied over decades, e.g., coupled oscillators \citep*{proskurnikov2016synchronization,poveda2019hybrid}, consensus \citep*{ren2005consensus,priscoli2015leader}, and formation control \citep*{mou2015target, vidal2003formation}.
An essential class of cooperative control problems is the {leader-following consensus problem}, which aims to drive the state of agents to synchronize with an autonomous system called the leader.
Typically, the leader is described by the following autonomous differential equation
\begin{align}\label{leader}
\dot{v}(t) &= Sv(t),
\end{align}
where $S\in \mathds{R}^{n\times n}$ is the system matrix of the leader; $v(t)\in \mathds{R}^{n}$ is the state of the leader representing the reference signal to be tracked.
A compulsory difficulty in overcoming the
the problem is that the state of the leader is not available for each follower or some nodes.
 As a result, the problem naturally raises the distributed state estimation problem, which aims to design the distributed observer asymptotically estimate the state of the leader system using local interaction with its neighbors \cite{wu2021design}.}

{\myr The distributed state estimation problem took shape
in \cite{olfati2007distributed,su2011cooperative} and was extended in the more general condition in \cite{wang2017distributed,mitra2018distributed,han2018simple,kim2019completely}.
To be more precise, a distributed estimation algorithm was designed to estimate the state of the leader over the undirected graph in \cite{olfati2007distributed} under the assumption that each agent can observe the leader's state independently.
The distributed observer over the general directed graph was proposed in \cite{su2011cooperative} to solve the cooperative output regulation problem. In this case, not all the followers can access the leader's signal $v(t)$, which can be viewed as semi-jointly observable distributed systems.
%
%
Another type of distributed observer was constructed in \cite{mitra2018distributed} under locally jointly observable systems assumption in the sense that each follower can only access the locally partial state of the leader and effectively communicates with its neighbors to estimate the state of the leader cooperatively.
%
%
%
It is worth mentioning that \cite{wang2017distributed,han2018simple,kim2019completely} designed the distributed observer under the jointly observable assumption that each follower can access a limited amount of state about the leader system, and each distributed sensor node can achieve an estimation of the leader's entire state.
Admittedly, the jointly observable assumption is the mildest possible restriction on the leader system as it allows the leader's state to be unobservable to each follower. In the meantime, the follower can still reconstruct the state by locally continuously interacting with their neighbor.
More recently, the study of the distributed observer has turned into a nonlinear leader over the jointly observable assumption in \cite{wu2021design}, and semi-jointly observable assumption in \cite{liu2019cooperative}. }

%
%
%
%
%
%
%
%
%

Communication constraints naturally exist in distributed systems, posing grand challenges to control analysis and design. Most existing cooperative control laws and protocols rely on continuous sensing and communication among agents, which is over-idealized and hard to implement.
Therefore, more realistic designs based on intermittent signals are desired \citep*{chen2012optimal}.
Time-triggered and event-triggered sampling schemes are two commonly used approaches in practice to generate intermittent signals for estimation and control.
With the time-triggered approach, continuous-time signals are often sampled periodically, resulting in periodic observations and control updates \citep*{chen1991h}.
By contrast, with the event-triggered approaches, continuous-time signals are sampled according to prescribed or adaptive triggering conditions, leading to irregular observations and control updates \citep*{astrom2002comparison,meng2012optimal,borgers2014event}, which may significantly reduce unnecessary consumption of resources.
In this regard, cooperative control using the event-triggered approach has received a great deal of attention \citep*{yu2020zeno}. 
Moreover, event-triggered robust cooperative output regulation for a class of linear minimum-phase distributed systems was considered in \cite{liu2017event} by using an augmented system approach, and a more complex nonlinear version was studied in \cite{liu2018cooperative} by employing a distributed internal model design.

{\myr In particular, the event-triggered approach has been applied to the distributed state estimation problem for both linear and nonlinear distributed systems \citep*{liu2017event,hu2017cooperative,hu2018event,liu2018cooperative}.
For example, \cite{hu2017cooperative} proposed an event-triggered distributed observer under the semi-jointly observable assumption to achieve cooperative output regulation by utilizing a self-triggering strategy, and the problem over switching networks was considered in \cite{hu2018event}.
In addition, an event-triggered distributed observer subject to combined output observable conditions is firstly designed in \cite{deng2021distributed} for distributed tracking problems for uncertain nonlinear distributed systems.
However, to verify whether events should occur or not, the results mentioned above require continuous and real-time monitoring of the neighbors' states, which contradicts the original purpose of introducing event-based control as a means of reducing communication in distributed systems\cite{meng2013event}.}

Recent efforts have focused on designing triggering mechanisms with localized information to reduce the dependence on real-time information of neighbor agents. In \cite{liu2017distributed}, an open-loop estimator was designed for each agent to estimate the neighbors' states and achieve leaderless synchronization. {\myr This estimator-based approach was further used to solve the event-triggered distributed tracking problem for the linear distributed systems in \cite{cheng2017event} and the nonlinear distributed systems in \cite{deng2021distributed}.} In \cite{meng2018event}, a distributed observer involving multiple predictors and an edge-based triggering mechanism was proposed to remove the real-time inter-agent communication, and a similar approach was applied to cooperative output regulation with bounded inputs in \cite{cheng2019coordinated}.

All the aforementioned results are based on continuous-time signals of agents, and the triggering condition has to be monitored and verified continuously. However, most control algorithms and communication protocols are implemented with digital devices, resulting in clock-driven measurements and updates. Therefore, much effort has been devoted to developing various event-triggered strategies based on discrete-time signals of agents, among which the periodic event-triggered strategy has received much attention \citep*{heemels2012periodic, meng2013event, guo2014distributed,garcia2016periodic}. Very recently, the periodic event-triggered strategy has been employed for the distributed state estimation problem of linear distributed systems in \cite{zheng2020periodic}. With the periodic event-triggered strategy, the triggering condition is checked at discrete sampling instants, making it suitable for digital implementation.
Also, it naturally excludes Zeno behavior, which is undesirable due to its impracticability \citep*{heemels2012periodic}.
Besides, the inter-event steps with the periodic event-triggered strategy are multiples of the sampling period, facilitating related protocol design.
{\myr Just as \cite{liu2015event} pointed out, the numerical simulation's step-length guarantees strictly positive minimum inter-event times and excludes Zeno behavior. While \cite{liu2015event} still observed that the inter-event times converge to the step-length of the numerical simulation in the finite time. Hence, the periodic event-triggered strategy is of practical interest and deserves more effort for distributed systems. }
%

%

%
{\myr In light of the pathological sampling, a significant design issue of the sampled-data systems (\cite[Section~3.2]{chen2012optimal}) has rarely been treated in the existing literature related to distributed systems.
Pathological sampling means the discretized version of a stabilizable/detectable continuous-time plant is not itself stabilizable and detectable\citep*{middleton1995non}.
A typical phenomenon of pathological sampling is called aliasing, or frequency folding \cite{middleton1990digital}, which means that the observer reconstructs a different signal from the sampled value leading to misinterpretation.}
{\myr As in \cite{castillo1997regulation,lawrence2001output}, another negative effect of pathological sampling is that the solvability of the sampled output regulation cannot be achieved, such as \cite{wang2021robust}.}
Although some researchers have noticed the impact of the sampling period on consensus behavior of second-order systems \cite{yu2011second,huang2016some}, general results on the sampling period and its relationship with stability and performance of distributed systems are pretty limited, {\myr especially non-pathological sampling for the distributed systems.
%
In addition, the stability of the continuous-time distributed systems only relies on the minimum real part among the eigenvalues of the Laplacian matrix in \cite{su2011cooperative}.
Only choosing the sampling period for each subsystem to guarantee controllability and observability is not enough for the globally distributed systems to achieve consensus.
Moreover, the pathological sampling interval lies on the entire positive real number axis in discrete form, and the Lebesgue measure could be zero.}
The sampling period in most cases is either selected as a sufficiently small number or restricted to some values on a fixed interval determined by a set of linear matrix inequalities or the spectral radius/norm of a specific graph-induced matrix \citep*{meng2013event, guo2014distributed, liu2019leader, zheng2020periodic}, {\myr which are merely sufficient conditions and partially reliable solutions for the distributed systems}.
%
%
Apart from this, the interaction between periodic sampling and event triggering is poorly understood, and their impact on the inter-event dynamics remains unclear. {\myr Not to mention that how the inter-event times behave in the continuous case is still unknown, as pointed out in \cite{postoyan2019inter}}.

Motivated by the studies mentioned above, distributed state estimation problem of linear distributed systems over the {\myr semi-jointly observable condition} is investigated in this paper using \emph{mixed time- and event-triggered observers} approach. The main contributions are summarized as follows:
\begin{enumerate}
  \item Mixed time- and event-triggered observers are proposed to estimate the state of the leader with discontinuous monitoring and localized event triggering {\myr over the semi-jointly observable assumption}. {\myr{The superposition of
distributed event-triggered sampling and time-based sampling is studied extensively.}}
  \item The interactions among sampling periods, topologies, reference signals, and related observability has been fully revealed, and all non-pathological / pathological sampling periods are given. {\myr{ Other than the continuous-time case in \cite{su2011cooperative}, we show that the image parts among the eigenvalues of the Laplacian matrix also play essential roles in the convergence of time-triggered distributed observers.}}
  \item {\myr The interplay between the phrase of the graph and non-pathological sampling periods is illustrated by using two exceptional
cases. The lower bound and existence of optimal solution related to the convergence rate of the time-triggered distributed observers are exhibited.}
  \item The convergence of the error dynamics holds for a class of convergent triggering functions, including the commonly used exponential function as a particular case.
  \item The mixed time- and event-triggered architecture naturally excludes Zeno behavior.  For a particular class of triggering functions bounded by exponential functions, inter-event behaviors have been analyzed, and their links with period sampling and event triggering have been revealed.
\end{enumerate}

The rest of this paper is organized as follows: {\myr In Section \ref{section2}, a pair of time-triggered and event-triggered observers is designed, the distributed state estimation is formulated, and a standard assumption and some technique lemmas are introduced.} Section \ref{mainresults} is devoted to analyzing related stability. The interaction between periodic sampling and event triggering is revealed in Section \ref{intereventstep}.
Examples are provided in Section \ref{section6} to show the effectiveness and efficiency of the proposed approach, followed by conclusions given in Section~\ref{conlu}.

\textbf{Notation:} $\mathds{R}$ ($\mathds{R}^{+}$) and $\mathds{C}$ are the set of (positive) real numbers and the set of complex numbers, respectively. $\mathds{N}$ denotes all the natural numbers. $\mathds{Z}$($\mathds{Z}^{+}$) is the set of (positive) integers. $I_n$ denotes the $n\times n$ identity matrix.  For a collection of vectors $b_i\in \mathds{R}^{n_i \times p}$, $\col(b_1,\dots,b_m)\triangleq\big[b_1^T, \cdots, b_m^T\big]^T.$ For a square matrix $S\in \mathds{R}^{n\times n}$,  $\lambda_q$, $q=1,2,\dots,n$, represent its eigenvalues. For a complex number $z\in \mathds{C}$, $\operatorname{Re}(z)$ and $\operatorname{Im}(z)$ denote the real and imaginary parts of $z$, respectively; $\operatorname{Arg}\left(z\right)\in(-\pi,\pi]$ denotes the principal value of the argument of $z$; $|z|$ denotes the modulus of $z$; $z^*$ denotes the complex conjugate of $z$. For a matrix $B\in \mathds{R}^{m\times n}$, $\|B\|$ stands for the 2-norm of $B$. {\myr $\otimes$ represents the Kronecker product of matrices, and for any matrices $A$, $B$, $C$, $D$ of conformable dimensions, it has the following properties
\begin{align*}
(A\otimes B)(C\otimes D)=&(AC)\otimes (BD),\\
(A+B)\otimes(C+D)=&(A\otimes C)+(A\otimes D)\\
&+(B\otimes C)+(B\otimes D).
\end{align*}
}

\section{Objective Formulation}\label{section2}

\subsection{Graph theory}

The distributed system considered in this paper involves a leader and $N$ followers.
They are connected through a network topology described by a digraph $\mathcal{G}\triangleq\left(\mathcal{V},%
\mathcal{E}\right)$ with $\mathcal{V}=\{0,\dots,N\}$ and $\mathcal{E}\subseteq\mathcal{V}^2$, where $\mathcal{V}$ and $\mathcal{E}$ represent the sets of vertices and edges, respectively. Throughout the paper, vertex $0$ is associated with
the leader and vertex $i\in\mathcal{V}_{F}\triangleq\{1,2,\dots,N \}$ is associated with each follower.
For $i\in\mathcal{V}$ and $j\in\mathcal{V}_{F}$, the ordered pair $(i,j) \in {\mathcal{E}}$ represents that there is a directional communication link from agent $i$ to agent $j$. $\mathcal{%
N}_i\triangleq\{j|(j,i)\in \mathcal{E}\}$ denotes all the neighbors of agent
$i$. The Laplacian of a digraph is represented by $L=\left[l_{ij}\right]\in\mathds{R}^{N+1}$, where $l_{i i}=-\sum_{j=1}^{N} l_{i j}$, $l_{ij}=-1$ for $i \neq j$ if $(j, i) \in \mathcal{E}$, and $l_{ij}=0$ for $i \neq j$ otherwise. $H\in\mathds{R}^{N}$ denotes the leader-following matrix obtained by deleting the first row and column of ${L}$ \citep*{su2011cooperative,yan2018new}. More details on the graph theory can be found in \cite{godsil2013algebraic}.

\subsection{Mixed time- and event-triggered observers}

\begin{figure}[htp]
\centering
\includegraphics[trim=165 586 -90 125,clip,scale=0.9]{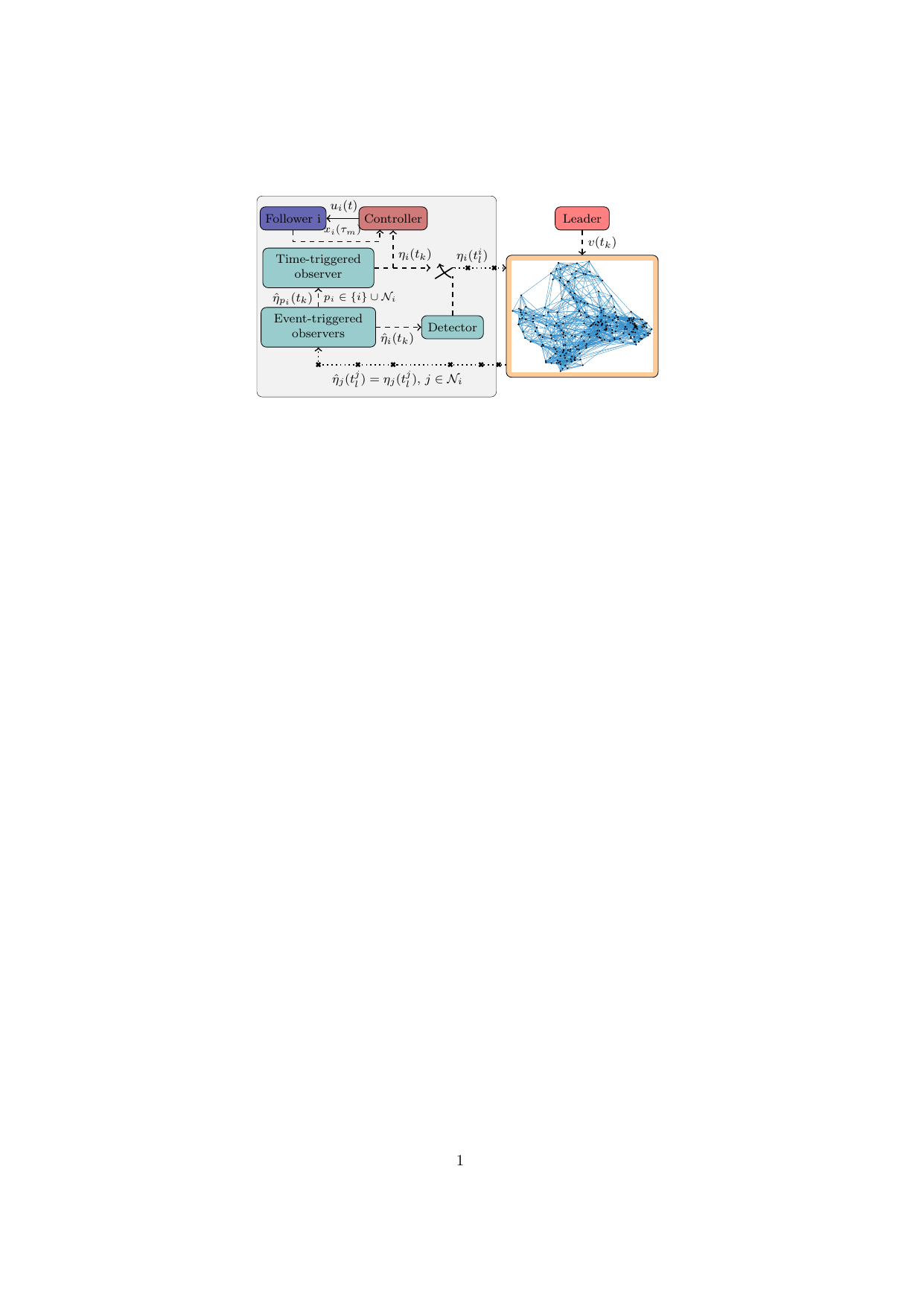}
\caption{Schematic of mixed time- and event-triggered observers}\label{figschematic}
\end{figure}

{\myr The dynamics of the followers will be omitted in this paper. We will mainly focus on designing the mixed time- and the event-triggered distributed observer. The mixed time- and the event-triggered observer is a dynamic compensator, which simultaneously estimates the leader's state by using intermittent local interaction such that the follower can synthesize control law to achieve the control goal such as consensus or formation pattern. The design schematic is shown in Figure \ref{figschematic}, and the time-triggered observer and the event-triggered observer are proposed to estimate the leader's state $v(t)$ without continuous information exchange.}

As the state of the leader is assumed to be accessible to some followers, we introduce the following time-triggered observer for each follower to estimate the state of the leader
\begin{equation}
\dot{\eta}_i(t)=S\eta_i(t) + \mu\sum\nolimits_{j\in\mathcal{N}_i}{(\hat{\eta}_j(t)-\hat{\eta}_i(t))},\label{compensator}
\end{equation}
where $\mu>0$ is the observer gain to be determined; $\hat{\eta}_0(t)=v(t)$; $\eta_i(t)\in \mathds{R}^n$, $i\in\mathcal{V}_{F}$, is the estimation of $v(t)$; $\hat{\eta}_i(t)$ and $\hat{\eta}_j(t)$, $j\in\mathcal{N}_i$ are the auxiliary estimation signals
of $\eta_i(t)$ and $\eta_j(t)$, $j\in \mathcal{N}_i$, respectively. For each follower $i$, the auxiliary estimation signals $\hat{\eta}_{p_i}(t)$, $p_i\in \left\{i\right\}\cup\mathcal{N}_i$, are updated periodically with a step size $h>0$, and the zero-order holder is used to maintain the signals between sampling instants, namely,
\begin{align*}
\hat{\eta}_{p_i}\left(t\right) & =\hat{\eta}_{p_i}\left(t_{k}\right),\;t\in\left[t_{k}~t_{k+1}\right),~k\in \mathds{N},\\
h & =t_{k+1}-t_{k}.
\end{align*}
The discrete-time auxiliary estimation signals $\hat{\eta}_{p_i}\left(t_{k}\right)$, $p_i\in \left\{i\right\}\cup\mathcal{N}_i$, of each follower $i$ are generated by an auxiliary observer with event-triggered communication:
 \begin{align}\label{exosystem1}
 \hat{\eta}_{p_i}(t_{k+1})&=e^{Sh}\hat{\eta}_{p_i}(t_k),~t_k\neq t_{l}^{p_i},~l\in \mathds{Z}^{+},\nonumber\\
\hat{\eta}_{p_i}(t_{l}^{p_i})&=\eta_{p_i}(t_{l}^{p_i}),~t_k= t_{l}^{p_i},~l\in \mathds{Z}^{+},
 \end{align}
where $t_{l}^{p_i}$ denotes the $l$-th triggering instant of follower $p_i$. Each follower $i$ is equipped with such a time-triggered observer on $\eta_{i}$ and an event-triggered observer on $\hat{\eta}_{p_i}$, $p_i\in \left\{i\right\}\cup\mathcal{N}_i$. At each triggering instant $t_{l}^{i}$, follower $i$ broadcasts its estimation $\eta_{i}\left(t_{l}^{i}\right)$, and related followers update their auxiliary observers accordingly. The sequence of triggering instants $t_{l}^{i}$, $l\in \mathds{Z}^{+}$ for each follower $i$ is generated by the following event-triggered mechanism
\begin{subequations}\label{setrigger}
\begin{align}
t_{l+1}^{i}&=t_{l}^{i}+s^i_{l} h,\label{setrigger1}\\
s^i_{l}&=\min\left\{r\in\mathds{Z}^{+}\Big|
           \begin{array}{c}
             \|\bar{\eta}_i(t_k)\|> f_i(t_k), \\
             t_k=t_{l}^{i}+rh. \\
           \end{array}\right\},\label{setrigger2}
\end{align}
\end{subequations}
where $s^i_{l}$ is the inter-event step of follower $i$; $\bar{\eta}_i(t)\triangleq\eta_i(t)-\hat{\eta}_{i}(t)$ is the \emph{absolute auxiliary error} of follower $i$; $f_i(t_k)$ for each $i\in\mathcal{V}_{F}$ is a positive bounded function and \emph{ converges to zero} as $t_k\rightarrow\infty$, e.g., $f_i(t_k)=\frac{\sigma_i}{\alpha_i t_k+1}$ or $f_i(t_k)=\sigma_i\ln \left(1+ e^{-\alpha_i t_k }\right)$.
With the mechanism in (\ref{setrigger}), triggering events only occur at sampling/updating instants, that is, $\{t_{0}^{i},t_{1}^{i},\cdots\}\subseteq \{t_0,t_1,t_2,\cdots\}$, and the set of inter-event time $\{t_{l+1}^{i}-t_{l}^{i},~l\in \mathds{Z}^{+},~i\in\mathcal{V}_F\}$ is lower bounded by the sampling period $h$.

$\eta_i(t)\in \mathds{R}^n$, $i\in\mathcal{V}_{F}$, is the estimation of $v(t)$; $\hat{\eta}_i(t)$ and $\hat{\eta}_j(t)$, $j\in\mathcal{N}_i$ are the auxiliary estimation signals
of $\eta_i(t)$ and $\eta_j(t)$, $j\in \mathcal{N}_i$, respectively. For each follower $i$, the auxiliary estimation signals $\hat{\eta}_{p_i}(t)$, $p_i\in \left\{i\right\}\cup\mathcal{N}_i$, are updated periodically with a step size $h>0$. Each follower $i$ is equipped with such a time-triggered observer on $\eta_{i}$ and an event-triggered observer on $\hat{\eta}_{p_i}$, $p_i\in \left\{i\right\}\cup\mathcal{N}_i$.

{\myr \begin{Remark}  Theorem \ref{radio} can be directly applied to the leader-following consensus problem with system \eqref{leader}  as the leader and the following system
\begin{equation}
\dot{x}_i(t)=Sx_i(t)+u_i(t),~~~~~~~~i=1,\cdots,N \nonumber
\end{equation}
as $N$ follower subsystems. The leader-following consensus problem can be solved by following the zero-
order holder control law $$u_i(t)=\mu\sum\nolimits_{j\in\mathcal{N}_i}{(x_j(t_k)-x_i(t_k))},$$
where $x_0=v$, the positive $\mu$ exist if the sampling period $h$ satisfies all the conditions in the Theorem \ref{radio}.
\end{Remark}}

{\myr
The objective of this paper is to design a mixed time- and event-triggered
distributed observer to estimate the state of system \eqref{leader} in the
sense that the estimation state $\eta(t)\in \mathds{R}^{n}$ of each $i$th local follower's
observer converges to the state $v(t)$, i.e.,
$$\lim_{t\rightarrow \infty}(\eta(t)-v(t))=0.$$
}

{\myr
%
Another objective of this paper is that we want to reveal the interaction between periodic
sampling and event triggering and their
impact on inter-event dynamics.}

To move on, we need these assumptions as follows:

\begin{Assumption}\label{ass0}$\mathcal{G}$ contains a spanning tree with vertex 0 as the
root.
\end{Assumption}
Assumption \ref{ass0} is required for distributed observer design. Under Assumption \ref{ass0}, all eigenvalues of $H$ have positive real parts, that is, $\re\left(\lambda_i\left(H\right)\right)>0$, for $i\in\mathcal{V}_{F}$. According to the real parts of the eigenvalues of $S$, a partition of the set $\mathcal{Q}=\{1,\cdots,n\}$ is constructed as follows:
\begin{subequations}\label{partitionof1ton}
\begin{align}
\mathcal{Q}_1&=\{q|\operatorname{Re}(\lambda_q)>0, q\in\mathcal{Q}\},\\
\mathcal{Q}_2&=\{q|\operatorname{Re}(\lambda_q)<0, q\in\mathcal{Q}\},\\
\mathcal{Q}_3&=\{q|\operatorname{Re}(\lambda_q)=0, \operatorname{Im}(\lambda_q)\neq0, q\in\mathcal{Q}\},\\
\mathcal{Q}_4&=\{q|\operatorname{Re}(\lambda_q)=0, \operatorname{Im}(\lambda_q)=0, q\in\mathcal{Q}\}.
\end{align}
\end{subequations}
\subsection{Technical lemmas}
In the remaining of this paper, $H$ and $S$ in $\lambda_i\left(H\right)$ and $\lambda_q\left(S\right)$ are omitted for notational simplicity unless specifically stated, as they can be deduced from the subscripts $i$ and $q$. Also, suppose that the polar form of $\lambda_i$ and $\lambda_q$ can be written as
\begin{equation*}
 \lambda_i=|\lambda_i|e^{\theta_i j} ~~~~\textnormal{and}~~~~ \lambda_q=|\lambda_q|e^{\theta_qj},
\end{equation*}
where $\theta_i=\operatorname{Arg}\left(\lambda_i\right)$ and $\theta_q=\operatorname{Arg}\left(\lambda_q\right)$ for $q\in\mathcal{Q}$ and $i\in\mathcal{V}_F$. To facilitate the proof of the main results, for any $q\in\mathcal{Q}$ and $i\in\mathcal{V}_F$, four quantities are introduced:
\begin{subequations}\label{UVPT}
\begin{align}
U_{q}(h)&\triangleq e^{\re(\lambda_q)h}-\cos(\im(\lambda_q)h),\\
V_{q}(h)&\triangleq\sin(\im(\lambda_q)h),\\
\phi_{q}(h)&\triangleq\operatorname{Arg}\left(U_{q}(h)+jV_{q}(h)\right),\\
\psi_{i,q}(h)&\triangleq\theta_i+\phi_{q}(h)-\theta_q.
\end{align}
\end{subequations}
$\phi_{q}(h)$ is well-defined when $U_{q}(h)+jV_{q}(h)\neq0$. If this is not the case, then $\phi_{q}(h)$ will be determined through some limiting processes, e.g., $\lim_{h \rightarrow0}\operatorname{Arg}\left(U_{q}(h)+jV_{q}(h)\right)$ or $\lim_{\lambda_q \rightarrow0}\operatorname{Arg}\left(U_{q}(h)+jV_{q}(h)\right)$. The lemma below gives some important properties of these quantities, and its proof is provided in Appendix A.
\begin{Lemma}\label{lemmauvpt} For $U_{q}(h)$, $V_{q}(h)$, $\theta_q$, and $\phi_{q}(h)$ defined in \eqref{UVPT}, the following relationships hold:
\begin{enumerate}
    \item $\lim_{\lambda_q\rightarrow0}\frac{V_{q}^2(h)+U_{q}^2(h)}{|\lambda_q|^2}=h^2$,
    \item $\lim_{\lambda_q\rightarrow0}(\phi_{q}(h)-\theta_q)=0$,
    \item $\lim_{h \rightarrow0}\frac{U_{q}^2(h)+V_{q}^2(h)}{h^2}=|\lambda_q|^2$,
    \item $\lim_{h\rightarrow0}(\phi_{q}(h)-\theta_q)=0$.
\end{enumerate}
\end{Lemma}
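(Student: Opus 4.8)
The plan is to collapse all four statements into one asymptotic identity. First I would rewrite the pair $(U_q,V_q)$ as a single complex number: since $-\cos(\im(\lambda_q)h)+j\sin(\im(\lambda_q)h)=-e^{-j\im(\lambda_q)h}$, we have
\[
U_q(h)+jV_q(h)=e^{\re(\lambda_q)h}-e^{-j\im(\lambda_q)h}.
\]
Writing $w\triangleq\lambda_q h$ so that $\re(w)=\re(\lambda_q)h$ and $\im(w)=\im(\lambda_q)h$, this becomes $U_q(h)+jV_q(h)=e^{\re(w)}-e^{-j\im(w)}$. The key structural remark is that the two limiting regimes appearing in the lemma --- $\lambda_q\to0$ at fixed $h$ (items~1 and~2) and $h\to0$ at fixed $\lambda_q$ (items~3 and~4) --- are both instances of the single scalar limit $w\to0$.

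The engine of the proof is a second-order expansion about $w=0$. Using $e^{\re(w)}=1+\re(w)+\frac12\re(w)^2+O(|w|^3)$ and $e^{-j\im(w)}=1-j\im(w)-\frac12\im(w)^2+O(|w|^3)$, the first-order terms assemble into $\re(w)+j\im(w)=w$ and the quadratic terms into $\frac12(\re(w)^2+\im(w)^2)=\frac12|w|^2$, giving
\[
U_q(h)+jV_q(h)=w+\tfrac12|w|^2+O(|w|^3).
\]
Dividing by $w$ and using $|w|^2/w=\bar w$ yields
\[
\frac{U_q(h)+jV_q(h)}{h\lambda_q}=1+\tfrac12\bar w+O(|w|^2)\longrightarrow1\qquad(w\to0).
\]
Everything else is read off from this one relation.

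Taking squared moduli gives $(U_q^2+V_q^2)/(h^2|\lambda_q|^2)\to1$; fixing $h$ and sending $\lambda_q\to0$ yields item~1, while fixing $\lambda_q$ and sending $h\to0$ yields item~3. For the phase items I would use $\operatorname{Arg}(h\lambda_q)=\operatorname{Arg}(\lambda_q)=\theta_q$ (as $h>0$), so from the displayed ratio the quantity $\phi_q(h)-\theta_q$ equals $\operatorname{Arg}\big(1+\tfrac12\bar w+O(|w|^2)\big)$ up to an integer multiple of $2\pi$; since that argument tends to $0$, items~2 and~4 reduce to showing the multiple vanishes. The limiting definition of $\phi_q$ at the isolated point $w=0$, where $U_q+jV_q$ itself vanishes, is recovered consistently from the same expansion.

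The step I expect to need the most care is the branch cut of $\operatorname{Arg}$ along the negative real axis, which threatens a spurious $\pm2\pi$ discontinuity in $\phi_q(h)-\theta_q$ when $\lambda_q$ nears the negative reals (so $\theta_q\approx\pi$). I would dispose of it by observing that the correction factor $1+\frac12\bar w+O(|w|^2)$ lies in the open right half-plane for small $w$, hence has argument in $(-\pi/2,\pi/2)$; multiplying $h\lambda_q$ by it therefore rotates the phase by a vanishing angle and cannot jump across the cut, forcing the multiple of $2\pi$ to be zero. The borderline case $\im(\lambda_q)=0$, $\re(\lambda_q)<0$ is checked directly: there $V_q(h)=0$ and $U_q(h)=e^{\re(\lambda_q)h}-1<0$, so $\phi_q(h)=\pi=\theta_q$ exactly, confirming the limit.
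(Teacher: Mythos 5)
Your proposal is correct in substance but takes a genuinely different route from the paper. The paper treats the four items separately: items 1 and 2 via bivariate Taylor expansions of real auxiliary functions $f(x,y)$, $g(x,y)$, $p(x,y)$ in $(\re(\lambda_q),\im(\lambda_q))$, combined with the identities for $\sin(\phi_q(h)-\theta_q)$ and $\cos(\phi_q(h)-\theta_q)$ and a three-case argument showing $\phi_q(h)-\theta_q\in(-\pi,\pi)$; items 3 and 4 are done independently by L'H\^{o}pital's rule. You instead note that $U_q(h)+jV_q(h)=e^{\re(\lambda_q)h}-e^{-j\im(\lambda_q)h}$ and that both limiting regimes are the single limit $w=\lambda_q h\to 0$, so that one complex second-order expansion $U_q(h)+jV_q(h)=w+\tfrac12|w|^2+O(|w|^3)$ delivers all four claims simultaneously. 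This is more economical and makes transparent the interpretation (used later in the paper) of $\frac{U_q(h)+jV_q(h)}{h}$ as a sampled version of $\lambda_q$; what the paper's longer version buys is an airtight exclusion of the $2\pi$ ambiguity in the phase.

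That exclusion is the one step you need to tighten. The assertion that multiplying $h\lambda_q$ by a factor whose argument vanishes ``cannot jump across the cut'' does not follow on its own: a rotation by an arbitrarily small angle does cross the cut if $\theta_q$ is already within that angle of $\pi$, which is exactly what is at stake in item 2 along paths $\lambda_q\to0$ that approach the negative real axis; your direct check covers only $\theta_q=\pi$ exactly, not $\theta_q\to\pi^{-}$. The clean repair is the paper's own device: for $|\im(\lambda_q)|h<\pi$, $V_q(h)=\sin(\im(\lambda_q)h)$ has the same sign as $\im(\lambda_q)$ (and $U_q(h)$ has the sign of $\re(\lambda_q)$ when $\im(\lambda_q)=0$), so $U_q(h)+jV_q(h)$ and $\lambda_q$ always lie in the same closed half-plane and hence $\phi_q(h)-\theta_q\in(-\pi,\pi)$ unconditionally. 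Since this difference is congruent modulo $2\pi$ to $\operatorname{Arg}\big(1+\tfrac12\bar{w}+O(|w|^2)\big)$, which tends to $0$, the two must coincide for small $|w|$ and the multiple of $2\pi$ is zero. With that half-line inserted, your argument is complete.
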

We end this section by presenting a technical lemma \citep*[Lemma~1]{huang2016cooperative}, which will be used for stability analysis.
\begin{Lemma}\label{lemmaaymmk}Consider the following discrete-time system
\begin{equation*}
x(t_{k+1})=Fx(t_k)+G_1(t_k)x(t_k)+G_2(t_k),
\end{equation*}
where $x(t_k)\in \mathds{R}^{n}$, $F\in \mathds{R}^{n\times n}$, $G_1(t_k)\in \mathds{R}^{n\times n}$, and $G_2(t_k)\in \mathds{R}^{n}$. Suppose that $F$ is Schur, $G_1(t_k)$ and $G_2(t_k)$ are well-defined\footnote{\myr{The function $f: A\mapsto B$ is well defined if for each $x\in A$ there is a unique $b\in B$ with $f(a)=b$.}} for all $k\in \mathds{Z}^+$. Then, if $\lim\limits_{t_k\rightarrow\infty}G_1(t_k)= 0$ and $\lim\limits_{t_k\rightarrow\infty}G_2(t_k)= 0$ exponentially, $\lim\limits_{t_k\rightarrow\infty} x(t_k)= 0$ exponentially for any $x(t_0)\in \mathds{R}^{n}$.
\end{Lemma}
\section{Non-pathological Sampling and The Convergence Analysis}\label{mainresults}
{\myr In light of the pathological sampling, a significant design issue of the sampled-data systems (\cite[Section~3.2]{chen2012optimal})
has rarely been treated in the existing literature related to distributed systems.
Pathological sampling means the discretized version of a stabilizable/detectable continuous-time plant is not itself stabilizable and detectable \citep*{middleton1995non}.
Pathological sampling can destroy the stabilizability and detectability of the system \eqref{compensator} under zero-order hold discretization. Thus, it is necessary to avoid pathological
sampling periods. In the following, we will fully reveal the interactions among sampling periods, topologies,
reference signals, and related observability; and give all non-pathological / pathological
sampling periods.
}
\subsection{Error dynamics of the mixed time- and event-triggered observer}
Define the \emph{estimation error} and the \emph{relative auxiliary error} of follower $i$ as $\tilde{\eta}_i(t)\triangleq\eta_i(t)-v(t)$ and $\hat{\eta}_{ei}(t) \triangleq \sum_{j\in\mathcal{N}_i}{(\hat{\eta}_j(t)-\hat{\eta}_i(t))}$, respectively. Let $\tilde{\eta} \triangleq \col( \tilde{\eta}_1, \cdots, \tilde{\eta}_N )$, $\hat{\eta}_{e} \triangleq \col(\hat{\eta}_{e1}, \cdots, \hat{\eta}_{eN})$ and $\bar{\eta}\triangleq\col(\bar{\eta}_{ 1},\cdots,\bar{\eta}_{N})$ denote the stacked versions of related signals. Then, the estimation error dynamics can be described by
\begin{equation}\label{reeq}
  \dot{\tilde{\eta}}(t) =\left(I_N\otimes S\right)\tilde{\eta}(t) +\mu \hat{\eta}_{e}(t_k),~t\in\left[t_{k}~t_{k+1}\right).
\end{equation}
At each sampling instant, the stacked relative auxiliary error can be represented as
\begin{equation}
\hat{\eta}_{e}(t_k) =-\left(H\otimes I_n\right)\left(\tilde{\eta}(t_k)-\bar{\eta}(t_k)\right).\label{relaH} \end{equation}
Combining (\ref{reeq}) and (\ref{relaH}) yields a compact representation of the estimation error dynamics for $t\in [t_{k},t_{k+1})$ as follows
\begin{equation}
\dot{\tilde{\eta}}(t)=\left(I_N\otimes S\right)\tilde{\eta}(t)-\mu\left(H\otimes I_n\right)\left(\tilde{\eta}(t_k)-\bar{\eta}(t_k)\right). \label{reeq1}
\end{equation}
By applying the step-invariant transformation \citep*{chen2012optimal}, the continuous-time system in \eqref{reeq1} can be discretized to the system
\begin{equation}\label{solution1}
\tilde{\eta}(t_{k+1})=F\left(\mu\right)\tilde{\eta}(t_k)+G\left(\mu\right)\bar{\eta}(t_k),
\end{equation}
where
\begin{align*}
F\left(\mu\right)&=I_N\otimes e^{Sh}-\mu H\otimes \int_{0}^{h}e^{S\tau}d\tau,\\
G\left(\mu\right)&=\mu H\otimes \int_{0}^{h}e^{S\tau}d\tau.
\end{align*}
If events are triggered at all sampling instants, that is, $s^i_{l}\equiv1$, then $\bar{\eta}(t_k)\equiv0$, and (\ref{solution1}) reduces to the following purely time-triggered error dynamic system
\begin{equation}\label{solutionsa1}
\tilde{\eta}(t_{k+1})=F\left(\mu\right)\tilde{\eta}(t_k).
\end{equation}
{\myr
Then, the matrix $F\left(\mu\right)$ satisfies the following property.
\begin{Lemma}\label{}
All the eigenvalues of $F\left(\mu\right)$ are given by:
\begin{equation}\label{eiFmu}
e^{\lambda_q h}-\mu\lambda_i\int_{0}^{h} e^{\lambda_q\tau}d\tau,~~q\in\mathcal{Q}~~\textnormal{and}~~i\in\mathcal{V}_F.
\end{equation}
\end{Lemma}
\begin{proof}
Please see Appendix B
\end{proof}

}
\subsection{Non-pathological sampling periods and convergence of the time-triggered error dynamics}
To drive the estimation error, $\tilde{\eta}(t_{k})$ in (\ref{solution1}) to zero as time tends to infinity, the stability of the time-triggered error dynamics in (\ref{solutionsa1}) is critical. The following lemma gives a necessary and sufficient condition for the stability of (\ref{solutionsa1}) and reveals some fundamental relationships among sampling periods, topologies, and reference signals.
\begin{Theorem}\label{radio} Under Assumption \ref{ass0}, there exists a $\mu>0$ such that $F\left(\mu\right)$ is Schur if and only if $h$ satisfies the conditions below:
\begin{enumerate}[(I)]
\item \label{codn1} Phase condition:
   $$\psi_{i,q}(h)\in(-\frac{\pi}{2}, \frac{\pi}{2}),~~~~q\in \mathcal{Q}_1\cup Q_3~~\textnormal{and}~~i\in\mathcal{V}_F;$$
\item \label{codn2}  Magnitude condition: $$e^{2\re(\lambda_q)h}<\csc^2\psi_{i,q}(h),~~~~q\in \mathcal{Q}_1~~\textnormal{and}~~i\in\mathcal{V}_F;$$
\item \label{codn2+} Non-zero spectral mapping: $$h\neq \frac{2\kappa\pi}{\im(\lambda_q)},~~~~q\in \mathcal{Q}_3~~\textnormal{and}~~i\in\mathcal{V}_F,$$ where $\kappa \in \mathds{Z}$ and $\frac{\kappa}{\im(\lambda_q)}>0;$
\item \label{codn3} $\bigcap\limits_{q\in \mathcal{Q},~i\in \mathcal{V}_F}(\underline{\mu}_{i,q},\bar{\mu}_{i,q})\neq \emptyset$,
\end{enumerate}
where
\begin{align}
\bar{\mu}_{i,q}=&{\textstyle \frac{|\lambda_q|}{|\lambda_i|}\frac{\cos(\psi_{i,q}(h))+\sqrt{e^{-2\re(\lambda_q)h}-\sin^2(\psi_{i,q}(h))}}{e^{-\re(\lambda_q)h}\sqrt{V_{q}^2(h)+U_{q}^2(h)}}},\nonumber\\
\underline{\mu}_{i,q}=&{\textstyle \frac{|\lambda_q|}{|\lambda_i|}\frac{\cos(\psi_{i,q}(h))-\sqrt{e^{-2\re(\lambda_q)h}-\sin^2(\psi_{i,q}(h))}}{e^{-\re(\lambda_q)h}\sqrt{V_{q}^2(h)+U_{q}^2(h)}}}.\label{solutionmu}
\end{align}
\end{Theorem}
\begin{proof}
It can be verified from the property of Kronecker product and the spectral mapping theorem that the eigenvalues of $F\left(\mu\right)$ are given by:
\begin{equation}
e^{\lambda_q h}-\mu\lambda_i\int_{0}^{h} e^{\lambda_q\tau}d\tau,~~q\in\mathcal{Q}~~\textnormal{and}~~i\in\mathcal{V}_F.\nonumber
\end{equation}
Obviously, $F\left(\mu\right)$ is Schur if and only if these eigenvalues are in the unit circle. Direct calculation gives that
\begin{equation}\label{rhomu}
|e^{\lambda_qh}-\mu\lambda_i\int_{0}^{h} e^{\lambda_q\tau}d\tau\big|^2-1=\alpha_{i,q} \mu^2+\beta_{i,q}\mu+\gamma_{i,q},
\end{equation}
where
\begin{align}\label{albetaga}
  \alpha_{i,q}&= |\lambda_i|^2\big|\int_{0}^{h} e^{\lambda_q\tau}d\tau\big|^2,\nonumber\\
 \beta_{i,q}&=-2\operatorname{Re}(\lambda_i\int_{0}^{h} e^{\lambda_q\tau+\lambda_q^*h}d\tau),\\
 \gamma_{i,q} & =\big|e^{\lambda_q h}\big|^2-1,~~q\in\mathcal{Q}~~\textnormal{and}~~i\in\mathcal{V}_F.\nonumber
\end{align}
Hence, $F\left(\mu\right)$ is Schur if and only if there exists a $\mu>0$ such that $\alpha_{i,q} \mu^2+\beta_{i,q}\mu+\gamma_{i,q}<0$ for all possible $i$ and $q$.
For any $q\in\mathcal{Q}_1 \cup \mathcal{Q}_2 \cup \mathcal{Q}_3 $ and $i\in\mathcal{V}_F$, we have that (see Appendix C for the derivations)
\begin{align}\label{alphaq3q2q4}\textstyle
\alpha_{i,q}
&=\frac{|\lambda_i|^2}{|\lambda_q|^2}\big(U_{q}^2(h)+V_{q}^2(h)\big),\nonumber\\
\beta_{i,q}&=-2\frac{|\lambda_i|}{|\lambda_q|}e^{\re(\lambda_q)h}\sqrt{V_{q}^2(h)+U_{q}^2(h)}\cos(\psi_{i,q}(h)),\\
\gamma_{i,q}&=e^{2\re(\lambda_q)h}-1.\nonumber
\end{align}
For $q\in \mathcal{Q}_4$ and $i\in\mathcal{V}_F$, $\lambda_q=0$, which implies that $\alpha_{i,q}=|\lambda_i|^2h^2$, $\beta_{i,q}=-2 h \re(\lambda_i)$ and $\gamma_{i,q}=0$ from \eqref{albetaga}. It can be evaluated from Lemma \ref{lemmauvpt} that
 \begin{align}\textstyle
\alpha_{i,q}=&\lim_{\lambda_q\rightarrow 0}\frac{|\lambda_i|^2}{|\lambda_q|^2}\big(U_{q}^2(h)+V_{q}^2(h)\big)=|\lambda_i|^2h^2,\nonumber\\
\beta_{i,q}=&-2|\lambda_i|\lim_{\lambda_q\rightarrow 0}{\textstyle \frac{\sqrt{V_{q}^2(h)+U_{q}^2(h)}}{|\lambda_q|}}e^{\re(\lambda_q)h}\cos(\psi_{i,q}(h))\nonumber\\
=&-2 h \re(\lambda_i),\nonumber\\
\gamma_{i,q}=&\lim_{\lambda_q\rightarrow 0}e^{2\re(\lambda_q)h}-1=0.\nonumber
\end{align}
Therefore, for any $q\in\mathcal{Q}$ and $i\in\mathcal{V}_F$, $\alpha_{i,q}$, $\beta_{i,q}$ and $\gamma_{i,q}$ can be expressed by \eqref{alphaq3q2q4}. Based on the partition in (\ref{partitionof1ton}) and the coefficients in \eqref{alphaq3q2q4}, the existence of $\mu>0$ is discussed below.

\textbf{Case 1. $q\in\mathcal{Q}_1$ and $i\in\mathcal{V}_F$:}\\
As $\operatorname{Re}\left(\lambda_q\right)>0$, $\big|e^{\lambda_q h}\big|=\big|e^{\operatorname{Re}\left(\lambda_q\right) h}\big|>1$, which implies $\gamma_{i,q}>0$. Noting that $U_{q}(h)$ in \eqref{UVPT} cannot be zero for any $q\in\mathcal{Q}_1$, we have that $\alpha_{i,q}>0$. Since $\gamma_{i,q}>0$ and $\alpha_{i,q}>0$, there exists a $\mu>0$ such that $\alpha_{i,q} \mu^2+\beta_{i,q}\mu+\gamma_{i,q}<0$ if and only if $\beta_{i,q}<0$ and $ \beta_{i,q}^2-4\alpha_{i,q}\gamma_{i,q}> 0$, which is equivalent to $\cos(\psi_{i,q}(h))>0$ and $1-e^{2\re(\lambda_q)}\sin^2(\psi_{i,q}(h))> 0$ from the derivations in Appendix C. Hence, there exists a $\mu>0$ such that $\alpha_{i,q} \mu^2+\beta_{i,q}\mu+\gamma_{i,q}<0$ if and only if the following two conditions are satisfied
 \begin{enumerate}[(a)]
   \item $2\kappa\pi-\frac{\pi}{2}<\psi_{i,q}(h) <2\kappa\pi+\frac{\pi}{2},~~\kappa\in \mathds{Z}$,
   \item $e^{2\re(\lambda_q)h}<\csc^2\psi_{i,q}(h)$.
 \end{enumerate}
 Owing to $\operatorname{Re}\left(\lambda_q\right)>0$, $\theta_q \in(-\frac{1}{2}\pi,\frac{1}{2}\pi)$ and $U_{q}(h)>0$, it follows that $\phi_{q}(h)\in(-\frac{1}{2}\pi,\frac{1}{2}\pi)$ from \eqref{UVPT}. Under Assumption \ref{ass0}, all eigenvalues of $H$ have positive real parts, which leads to $\theta_i\in (-\frac{1}{2}\pi,\frac{1}{2}\pi)$. As a result, $\psi_{i,q}(h)\in (-\frac{3}{2}\pi, \frac{3}{2}\pi)$. Accordingly, condition (a) reduces to $-\frac{\pi}{2}<\psi_{i,q}(h) <\frac{\pi}{2}$.

\textbf{Case 2. $q\in\mathcal{Q}_2$ and $i\in\mathcal{V}_F$:}\\ As $\operatorname{Re}\left(\lambda_q\right)<0$, $\big|e^{\lambda_q h}\big|=\big|e^{\operatorname{Re}\left(\lambda_q\right) h}\big|<1$, which implies that $\gamma_{i,q}<0$. It follows from \eqref{UVPT} that $U_{q}(h)$ and $V_{q}(h)$ cannot be equal to zero when
$q\in\mathcal{Q}_2$, and thus $\alpha_{i,q}>0$. Since $\alpha_{i,q}>0$ and $\gamma_{i,q}<0$, there exists a $\mu>0$ such that $\alpha_{i,q} \mu^2+\beta_{i,q}\mu+\gamma_{i,q}<0$ if and only if $ \beta_{i,q}^2-4\alpha_{i,q}\gamma_{i,q}> 0$, which is equivalent to $e^{2\re(\lambda_q)h}<\csc^2\psi_{i,q}(h)$ from the derivations in Appendix C. It is noted that $$\sin^2(\psi_{i,q}(h))<1<e^{-2\re(\lambda_q)h},~~\textnormal{for~~any}~~q \in \mathcal{Q}_2.$$ Hence, the condition $e^{2\re(\lambda_q)h}<\csc^2\psi_{i,q}(h)$ always holds, and there exists a $\mu>0$ such that $\alpha_{i,q} \mu^2+\beta_{i,q}\mu+\gamma_{i,q}<0$ for any $q \in \mathcal{Q}_2$ and $h>0$.

\textbf{Case 3. $q\in\mathcal{Q}_3$ and $i\in\mathcal{V}_F$:}\\
As $\operatorname{Re}\left(\lambda_q\right)=0$, $\big|e^{\lambda_q h}\big|=\big|e^{\operatorname{Re}\left(\lambda_q\right) h}\big|=1$, which yields that $\gamma_{i,q}=0$. There exists a $\mu>0$ such that $\alpha_{i,q} \mu^2+\beta_{i,q}\mu <0$ if and only if $\beta_{i,q}<0$, which is equivalent to $\cos(\psi_{i,q}(h))>0$ and $\sqrt{V_{q}^2(h)+U_{q}^2(h)}\neq0$ from (\ref{alphaq3q2q4}). Hence, there exists a $\mu>0$ such that $\alpha_{i,q} \mu^2+\beta_{i,q}\mu <0$ if and only if $h$ satisfies
\begin{enumerate}[(a)]
   \item $2\kappa\pi-\frac{\pi}{2}<\psi_{i,q}(h) <2\kappa\pi+\frac{\pi}{2},~~\kappa\in \mathds{Z}$,
   \item $\im(\lambda_q)h\neq 2\kappa\pi,~~\kappa\in \mathds{Z}$.
 \end{enumerate}
When $\im(\lambda_q)h= 2\kappa\pi$, $U_{q}(h)=V_{q}(h)=0$, and $\phi_{q}(h)$ can be determined by using the limiting process, that is, $\phi_{q}(h)=\pm \frac{\pi}{2}$. Otherwise, $U_{q}(h)>0$, which implies that $\phi_{q}(h)\in(-\frac{1}{2}\pi,\frac{1}{2}\pi)$. Hence, $\phi_{q}(h)\in[-\frac{1}{2}\pi,\frac{1}{2}\pi]$. Noting that $\theta_i\in (-\frac{1}{2}\pi,\frac{1}{2}\pi)$ and $\theta_q =\pm\frac{\pi}{2}$, one has that $\psi_{i,q}(h)\in (-\frac{3}{2}\pi, \frac{3}{2}\pi)$, and, therefore, condition (a) reduces to $\label{psiii}-\frac{\pi}{2}<\psi_{i,q}(h) <\frac{\pi}{2}$. Regarding condition (b), it reduces to $h\neq \frac{2\kappa\pi}{\im(\lambda_q)}$, $\frac{\kappa}{\im(\lambda_q)}>0$, since $h$ is positive.

\textbf{Case 4. $q\in\mathcal{Q}_4$ and $i\in\mathcal{V}_F$:}\\
For $q\in \mathcal{Q}_4$, $\lambda_q=0$ implies $\alpha_{i,q}=|\lambda_i|^2h^2$, $\beta_{i,q}=-2 h \re(\lambda_i)$ and $\gamma_{i,q}=0$ from \eqref{albetaga}.
Under Assumption \ref{ass0}, all eigenvalues of $H$ have positive real parts and $h>0$, which leads to $\beta_{i,q}<0$ and $\alpha_{i,q}>0$. Thus, $\alpha_{i,q} \mu^2+\beta_{i,q}\mu<0$ always has positive solutions for any $q\in \mathcal{Q}_4$ and $h>0$.

Next, we show that for each $q\in \mathcal{Q}$, the solutions of $\alpha_{i,q} \mu^2+\beta_{i,q}\mu+\gamma_{i,q}=0$ can be represented by \eqref{solutionmu}. For any $q\in \mathcal{Q}_1\cup\mathcal{Q}_2\cup\mathcal{Q}_3$ and $i\in \mathcal{V}_F$, it is easy to verify that the roots of $\alpha_{i,q} \mu^2+\beta_{i,q}\mu+\gamma_{i,q}=0$ are $\bar{\mu}_{i,q}$ and $\underline{\mu}_{i,q}$ in \eqref{solutionmu}. Under Assumption \ref{ass0}, all eigenvalues of $H$ have positive real parts, that is, $\cos(\theta_i)=\frac{\re(\lambda_i)}{|\lambda_i|}>0$. By using Lemma \ref{lemmauvpt}, for the $\bar{\mu}_{i,q}$ and $\underline{\mu}_{i,q}$ defined in \eqref{solutionmu}, we have that
\begin{align}\label{soultion00limited}
\lim_{\lambda_q\rightarrow0}\bar{\mu}_{i,q}
=&\frac{1}{|\lambda_i|}\frac{\cos(\theta_{i})+|\cos(\theta_{i})|}{h}
=\frac{2\re(\lambda_i)}{|\lambda_i|^2h},\nonumber\\
\lim_{\lambda_q\rightarrow0}\underline{\mu}_{i,q}
=&\frac{1}{|\lambda_i|}\frac{\cos(\theta_{i})-|\cos(\theta_{i})|}{h}=0.
\end{align}
In addition, for $q\in \mathcal{Q}_4$ and $i\in \mathcal{V}_F$, the roots of $\alpha_{i,q} \mu^2+\beta_{i,q}\mu+\gamma_{i,q}=0$ are \begin{align}\label{soultion00}\underline{\mu}_{i,q}=0~~\textnormal{and} ~~\bar{\mu}_{i,q}=\frac{2\re(\lambda_i)}{|\lambda_i|^2h}, \end{align}
respectively.
Hence, the equations in \eqref{solutionmu} are equivalent to the equations in \eqref{soultion00} when $q\in \mathcal{Q}_4$ because of \eqref{soultion00limited}.

In conclusion, $\mu>0$ exists if and only if conditions \eqref{codn1} to \eqref{codn3} are all satisfied.
\end{proof}
\begin{Remark}
Theorem \ref{radio} gives all feasible sampling periods that lead to convergent error dynamics and observable connected systems. This can be regarded as an extension of the non-pathological sampling periods for observability in \cite{chen2012optimal} to distributed systems.
\end{Remark}

\subsection{The interplay between phrase of the graph and non-pathological sampling periods}
A significant feature of the distributed system considered in this paper is that the observability under-sampling is related not only to the system dynamics but also to the topological connection. According to Lemma \ref{lemmauvpt} and \eqref{UVPT}, $$\lim\limits_{h\rightarrow0}\frac{U_{q}\left(h\right)+jV_{q}\left(h\right)}{h\lambda_q}=1.$$ In terms of this, $\frac{U_{q}\left(h\right)+jV_{q}\left(h\right)}{h}$ can be regarded as the sampled version of $\lambda_q$ with the proposed distributed observers, and $\phi_q\left(h\right)-\theta_q$ is the phase shift of $\lambda_q$ due to sampling. These phase shifts of the eigenvalues of $S$ and the phases $\theta_i$ related to the graph jointly determine the observability as shown in condition \eqref{codn1}.

{\myr
In the line of the Magnitude condition \eqref{codn2} of Theorem \ref{radio},
the terms $e^{2\re(\lambda_q)h}$ and $\csc^2\psi_{i,q}(h)$ represent the divergence rate of the observed system and the tracking performance of the observer network, respectively. We now give two exceptional cases to illustrate the interplay between the phrase of the graph and non-pathological sampling periods:
}

{\myr \textbf{Special Case 1:} $\theta_i\equiv\theta_q$, $\forall i\in \mathcal{V}_F$ and $\forall q\in \mathds{Q}_1\equiv \mathds{Q}$($N=n$). We now show $h$ can be arbitrarily large for the Phase condition and Magnitude condition of Theorem \ref{radio} except for some isolated points:
\begin{enumerate}
  \item[1:] Owing to $\operatorname{Re}\left(\lambda_q\right)>0$, $\theta_q \in(-\frac{1}{2}\pi,\frac{1}{2}\pi)$ and $U_{q}(h)>0$, it follows that $\phi_{q}(h)\in(-\frac{1}{2}\pi,\frac{1}{2}\pi)$ from \eqref{UVPT}. It can be easy to conclude that the phase condition of Theorem \ref{radio} always holds because of $\psi_{i,q}(h)=\phi_{q}(h)$.
  \item[2:] Magnitude condition is equivalent to $$e^{2\re(\lambda_q)h}\sin^2\phi_{q}(h)<1.$$
 Hence, from equation \eqref{UVPT},  the Magnitude condition is satisfied if only if
  \begin{align*}
  e^{2\re(\lambda_q)h} V_{q}^2(h)<&U_{q}^2(h)+V_{q}^2(h).
    \end{align*}
    Hence, we have
      \begin{align}\label{magninegative}
     &e^{2\re(\lambda_q)h} V_{q}^2(h)-U_{q}^2(h)-V_{q}^2(h)\nonumber\\
 = & e^{2\re(\lambda_q)h}\sin^2(\im(\lambda_q)h)-e^{2\re(\lambda_q)h}\nonumber\\
 &+2e^{\re(\lambda_q)h}\cos(\im(\lambda_q)h)-1\nonumber\\
 = &- e^{2\re(\lambda_q)h}\cos^2(\im(\lambda_q)h)\nonumber\\
 &+2e^{\re(\lambda_q)h}\cos(\im(\lambda_q)h)-1\nonumber\\
 = &-\big[e^{\re(\lambda_q)h}\cos(\im(\lambda_q)h)-1\big]^2.
  \end{align}
Thus, from \eqref{magninegative}, Magnitude condition is satisfied for any arbitrarily large $h$ except for some isolated points $\{h|h\in \mathds{R}^{+},e^{\re(\lambda_q)h}\cos(\im(\lambda_q)h)=1\}$.
\end{enumerate}
}
\begin{Remark}\label{Speremark3}\myr An interesting result of the  Special Case 1 is that $\theta_i\equiv\theta_q\equiv 0$. The solution to $\alpha_{i,q} \mu^2+\beta_{i,q}\mu+\gamma_{i,q}<0$ for all possible $i$ and $q$ is
\begin{align}
\bar{\mu}_{i,q}={\textstyle \frac{|\lambda_q|}{|\lambda_i|}\frac{1+e^{-\re(\lambda_q)h}}{1-e^{-\re(\lambda_q)h}}},
~~\textnormal{and}~~\underline{\mu}_{i,q}={\textstyle \frac{|\lambda_q|}{|\lambda_i|}}.\label{solutionmuspecial}
\end{align}
If for all possible $i$ and $q$, $\frac{|\lambda_q|}{|\lambda_i|}$ are equal, then, the intersection of the solutions \eqref{solutionmuspecial} is $\big(\frac{|\lambda_q|}{|\lambda_i|}, \frac{|\lambda_q|}{|\lambda_i|}\frac{1+e^{-\re(\lambda_q)h}}{1-e^{-\re(\lambda_q)h}}\big)$ under this condition. Thus, the sampling period $h$ can be arbitrarily large. However, the Lebesgue measure of possible ranges $\mu$ will decrease exponentially as $h$ tends to infinity.
\end{Remark}

{\myr \textbf{Special Case 2:} $\theta_i\equiv0 $ and $\theta_q\neq0 $, $\forall i\in \mathcal{V}_F$ and $\forall q\in \mathds{Q}_1\equiv \mathds{Q}$($N=n$). We now show $h$ can't be arbitrarily large for the Phase condition and Magnitude condition of Theorem \ref{radio}:
\begin{enumerate}
  \item[1:] From equation \eqref{UVPT}, we have
  \begin{align*}
  \cos\big(\phi_{q}(h)&-\theta_q\big)=\cos\phi_{q}(h)\cos\theta_q+\sin\phi_{q}(h)\sin\theta_q\\
  =&{\textstyle \frac{U_{q}(h)\re(\lambda_q)+V_{q}(h)\im(\lambda_q)}{|\lambda_q|\sqrt{V_{q}^2(h)+U_{q}^2(h)}}}\\
  =&{\textstyle \frac{(e^{\re(\lambda_q)h}-\cos(\im(\lambda_q)h))\re(\lambda_q)+\sin(\im(\lambda_q)h)\im(\lambda_q)}{|\lambda_q|\sqrt{V_{q}^2(h)+U_{q}^2(h)}}}.
  \end{align*}
  Thus, the phase condition may be violated. For example,
 $\re(\lambda_q)=10^{-5}$, $\im(\lambda_q)=\pi$ and $h=3/2$ lead to
   $\cos\big(\phi_{q}(h)-\theta_q\big)<0$, which implies phase condition is not satisfied. However, when $h\rightarrow0$, $\sin(\im(\lambda_q)h)\im(\lambda_q)$ will be positive which implies $\cos\big(\phi_{q}(h)-\theta_q\big)>0$. Thus, this implies phase condition is always satisfied for small enough $h$.
  \item[2:] Magnitude condition is equivalent to $$e^{2\re(\lambda_q)h}\sin^2\big(\phi_{q}(h)-\theta_q\big)<1.$$
      From Lemma \ref{lemmauvpt}, we have $\lim_{h\rightarrow0}e^{2\re(\lambda_q)h}\sin^2\big(\phi_{q}(h)-\theta_q\big)=0 $
      and $\lim_{h\rightarrow\infty}e^{2\re(\lambda_q)h}\sin^2\big(\phi_{q}(h)-\theta_q\big)=\infty $.
      Hence,  the Magnitude condition is satisfied for some small $h$.
\end{enumerate}
}

{\myr To summarise: Special Case 1: $h$ can be arbitrarily large. Special Case 2: $h$ cannot be arbitrarily large. The phases of the graph have vital interactions with non-pathological sampling and related observer design. To some extent, Special case 1 can be regarded as "phase alignment" between the system observed and the observer network. Besides, to guarantee the observability, the sampled eigenvalues $\frac{U_{q}\left(h\right)+jV_{q}\left(h\right)}{h}$ as revealed in condition \eqref{codn2+} must be non-zero, which corresponds to the temporal anti-aliasing of signals, consistent with the centralized control and signal processing.
}

\subsection{The convergence rate of the time-triggered distributed observers}
{\myr For the system \eqref{solutionsa1}, the convergence rate is determined by the eigenvalues of $F(\mu)$. Thus, it will be interesting to investigate the properties of the Spectral radius related to $F(\mu)$.}

{\myr
\begin{Property} \label{proper1}$\min\limits_{\mu>0} \rho\big(F(\mu)\big)\geq\max\limits_{i\in \mathcal{V}_F,~q\in \mathds{Q}}\frac{e^{\re(\lambda_q)h}}{|\csc(\psi_{i,q}(h))|}$.
\end{Property}
\begin{proof} For any $h$ satisfies the conditions in Theorem \ref{radio}, from \eqref{rhomu},
$$\rho^2\big(F(\mu)\big)\in\{\alpha_{i,q} \mu^2+\beta_{i,q}\mu+\gamma_{i,q}+1, i\in \mathcal{V}_F,q \in \mathds{Q}\}.$$
 $\alpha_{i,q}$ is positive for any $i\in \mathcal{V}_F$ and $q\in \mathds{Q}$. Hence, the minimal value of $\alpha_{i,q} \mu^2+\beta_{i,q}\mu+\gamma_{i,q}+1$ is $-\frac{\beta_{i,q}^2-4\alpha_{i,q}\gamma_{i,q}}{4\alpha_{i,q}}+1$ at $\mu=-\frac{\beta_{i,q}}{2\alpha_{i,q}}$. Thus,
\begin{align*}
\min\limits_{\mu>0} \rho^2\big(F(\mu)\big)\geq&-\frac{\beta_{i,q}^2}{4\alpha_{i,q}}+\gamma_{i,q}+1\\
=& e^{2\re(\lambda_q)h}-\cos^2(\psi_{i,q}(h))e^{2\re(\lambda_q)h}\\
=& e^{2\re(\lambda_q)h}\sin^2(\psi_{i,q}(h)).
\end{align*}
Hence, the Spectral radius of $F(\mu)$ satisfies Property \ref{proper1}.
\end{proof}
}
{\myr Property \ref{proper1} reveals that the fastest convergence rate is related to
\begin{enumerate}[A]
  \item The divergence/convergence rate of the observed system ($e^{\re(\lambda_q h)}$).
  \item The tracking rate of the observer network ($\csc(\psi_{i,q}(h))$).
\end{enumerate}
Normally, the lower bound is not achievable with the proposed observer, and a more general architecture is needed. It will be interesting to find the fastest convergence rate, i.e. $\min\limits_{\mu>0} \rho\big(F(\mu)\big)$ for any non-pathological sampling $h$. We suppose all the eigenvalues of $S$ have non-negative real parts. Then, the Spectral radius related to $F(\mu)$ satisfies the following property.
}

{\myr
\begin{Property} \label{proper2}
The optimal solution to $\min\limits_{\mu > 0}\rho(F(\mu))$ is located at some intersection points of the parabolic curves $y_{i,q}(\mu)=\alpha_{i,q}\mu^2+\beta_{i,q}\mu+\gamma_{i,q}$ for $i\in \mathds{N}$ and $q\in \mathds{Q}$.
\end{Property}}
\begin{proof}\myr For any $h$ satisfies the conditions in Theorem \ref{radio}, $\bigcap\limits_{q\in \mathcal{Q},~i\in \mathcal{V}_F}(\underline{\mu}_{i,q},\bar{\mu}_{i,q})$ is non-empty. Clearly, The optimal solution to $\min\limits_{\mu > 0}\rho^2(F(\mu))$ exists and is in the following closed interval $\underline{\mu}\leq \mu \leq \bar{\mu}$ where $\underline{\mu}=\max\limits_{i\in \mathcal{V}_F, q\in \mathds{Q}}\underline{\mu}_{i,q}$ and $\bar{\mu}=\min\limits_{i\in \mathcal{V}_F, q\in \mathds{Q}}\bar{\mu}_{i,q}$. Since all the eigenvalues of $S$ have non-negative real parts, $\underline{\mu}_{i,q}\geq 0$ for any $i\in \mathcal{V}_F$ and $q\in \mathds{Q}$ from the proof of Theorem \ref{radio}(Case 1, Case 3 and Case 4). Thus, $\underline{\mu}$ and $\bar{\mu}$ are the solution to $y_{i,q}(\mu)=0$, which implies $\rho^2(F(\mu)\big|_{\mu=\bar{\mu}~~or~~\bar{\mu}}=1$. For any $\mu$ in $\bigcap\limits_{q\in \mathcal{Q},~i\in \mathcal{V}_F}(\underline{\mu}_{i,q},\bar{\mu}_{i,q})$, we have $y_{i,q}(\mu)<0$. Hence, the optimal solution to $\min\limits_{\mu > 0}\rho^2(F(\mu))=\max\limits_{q\in \mathcal{Q},~i\in \mathcal{V}_F}\{y_{i,q}(\mu)+1\}$ is in the $(\underline{\mu}, \bar{\mu})=\bigcap\limits_{q\in \mathcal{Q},~i\in \mathcal{V}_F}(\underline{\mu}_{i,q},\bar{\mu}_{i,q})$.

\end{proof}

\subsection{Relationships between time-triggered distributed observers and continuous case}
The following corollary shows that Theorem \ref{radio} includes the result in  \cite{su2011cooperative} as a special case.
\begin{Corollary}\label{lemmaeigen}For $i\in \mathcal{V}_F$ and $q\in \mathcal{Q}$, $\lim\limits_{h\rightarrow 0}\bar{\mu}_{i,q}= +\infty$, $\lim\limits_{h\rightarrow 0}\underline{\mu}_{i,q}=\frac{\re(\lambda_q)}{\re(\lambda_i)}$, and
\begin{equation*}
 {\textstyle \lim\limits_{h\rightarrow 0} \bigcap\limits_{q\in \mathcal{Q},~i\in \mathcal{V}_F}(\underline{\mu}_{i,q},\bar{\mu}_{i,q})=\big(\frac{\max_{q\in \mathcal{Q}}(\re(\lambda_q))}{\min_{i\in \mathcal{V}_F}(\re(\lambda_i))},+\infty\big)}
\end{equation*}
where $\bar{\mu}_{i,q}$ and $\underline{\mu}_{i,q}$ are defined in \eqref{solutionmu}.
\end{Corollary}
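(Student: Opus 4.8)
The plan is to evaluate the three limits directly from the closed forms \eqref{solutionmu}, using Lemma \ref{lemmauvpt} to control $\sqrt{U_q^2(h)+V_q^2(h)}$ and $\phi_q(h)$ as $h\to0$. First I would record the elementary limits shared by all parts. By item~4 of Lemma \ref{lemmauvpt}, $\phi_q(h)\to\theta_q$, so $\psi_{i,q}(h)=\theta_i+\phi_q(h)-\theta_q\to\theta_i$; hence $\cos(\psi_{i,q}(h))\to\cos\theta_i$ and, since $e^{-2\re(\lambda_q)h}\to1$, we get $\sqrt{e^{-2\re(\lambda_q)h}-\sin^2(\psi_{i,q}(h))}\to\sqrt{1-\sin^2\theta_i}=|\cos\theta_i|=\cos\theta_i$, the last equality because Assumption \ref{ass0} forces $\re(\lambda_i)>0$ and thus $\theta_i\in(-\tfrac\pi2,\tfrac\pi2)$. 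I would also use $e^{-\re(\lambda_q)h}\to1$ and item~3 of Lemma \ref{lemmauvpt} in the form $\sqrt{U_q^2(h)+V_q^2(h)}/h\to|\lambda_q|$. Throughout I treat $q\in\mathcal{Q}_1\cup\mathcal{Q}_2\cup\mathcal{Q}_3$ (so $|\lambda_q|\neq0$) and dispose of $q\in\mathcal{Q}_4$ separately via the explicit roots \eqref{soultion00}.

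For $\bar{\mu}_{i,q}$ the conclusion is immediate: its numerator tends to $\tfrac{|\lambda_q|}{|\lambda_i|}(\cos\theta_i+\cos\theta_i)=\tfrac{2|\lambda_q|}{|\lambda_i|}\cos\theta_i>0$, while the denominator $e^{-\re(\lambda_q)h}\sqrt{U_q^2(h)+V_q^2(h)}\to0$, giving $\bar{\mu}_{i,q}\to+\infty$; for $q\in\mathcal{Q}_4$ the same follows from $\bar{\mu}_{i,q}=\tfrac{2\re(\lambda_i)}{|\lambda_i|^2h}$ in \eqref{soultion00}.

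The substantive step is $\underline{\mu}_{i,q}$, where the same limits give the indeterminate form $0/0$ (the numerator tends to $\cos\theta_i-\cos\theta_i=0$). I would resolve this by rationalizing: multiplying the factor $\cos(\psi_{i,q}(h))-\sqrt{e^{-2\re(\lambda_q)h}-\sin^2(\psi_{i,q}(h))}$ by its conjugate turns its numerator into $\cos^2(\psi_{i,q}(h))+\sin^2(\psi_{i,q}(h))-e^{-2\re(\lambda_q)h}=1-e^{-2\re(\lambda_q)h}$. Dividing the resulting numerator and denominator by $h$ and invoking $(1-e^{-2\re(\lambda_q)h})/h\to2\re(\lambda_q)$ together with the limits above yields
\[
\lim_{h\to0}\underline{\mu}_{i,q}=\frac{|\lambda_q|}{|\lambda_i|}\cdot\frac{2\re(\lambda_q)}{2\cos\theta_i\cdot|\lambda_q|}=\frac{\re(\lambda_q)}{|\lambda_i|\cos\theta_i}=\frac{\re(\lambda_q)}{\re(\lambda_i)},
\]
using $|\lambda_i|\cos\theta_i=\re(\lambda_i)$; the case $q\in\mathcal{Q}_4$ follows from $\underline{\mu}_{i,q}=0$ in \eqref{soultion00}. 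This rationalization is the only non-routine manoeuvre, and it is the step I expect to be the main obstacle, since direct substitution fails.

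Finally, for the intersection I would use that the intersection of the finitely many open intervals equals $(\max_{i,q}\underline{\mu}_{i,q},\min_{i,q}\bar{\mu}_{i,q})$, and that a max or min of finitely many convergent functions converges to the max or min of the limits. Hence $\min_{i,q}\bar{\mu}_{i,q}\to+\infty$, while $\max_{i,q}\underline{\mu}_{i,q}\to\max_{i,q}\tfrac{\re(\lambda_q)}{\re(\lambda_i)}=\tfrac{\max_q\re(\lambda_q)}{\min_i\re(\lambda_i)}$; the last identity holds because $\re(\lambda_i)>0$ for every $i$, so the ratio is maximized by the largest $\re(\lambda_q)$ over the smallest $\re(\lambda_i)$ (the natural exosystem case $\max_q\re(\lambda_q)\ge0$). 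This produces the claimed limiting interval $\big(\tfrac{\max_q\re(\lambda_q)}{\min_i\re(\lambda_i)},+\infty\big)$, which recovers the continuous-time gain threshold of \cite{su2011cooperative}.
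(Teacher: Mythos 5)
Your proposal is correct and follows essentially the same route as the paper: the paper also sends $h\to0$ in the two roots, obtains $+\infty$ for $\bar{\mu}_{i,q}$ from a positive numerator over a vanishing denominator, and resolves the $0/0$ form in $\underline{\mu}_{i,q}$ by the same rationalization (written there as $\underline{\mu}_{i,q}=\frac{2\gamma_{i,q}}{-\beta_{i,q}+\sqrt{\beta_{i,q}^2-4\alpha_{i,q}\gamma_{i,q}}}$ with the limits $\beta_{i,q}/h\to-2\re(\lambda_i)$ and $\gamma_{i,q}/h\to2\re(\lambda_q)$, which is your trigonometric computation in the $(\alpha_{i,q},\beta_{i,q},\gamma_{i,q})$ coordinates). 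Your explicit caveat that the final max/min identity needs $\max_q\re(\lambda_q)\ge0$ is actually a point the paper glosses over.
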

\begin{proof}
It follows from Lemma \ref{lemmauvpt} and Appendix C that
\begin{subequations}\label{abc}
\begin{align}
\lim_{h\rightarrow 0}\frac{\alpha_{i,q}}{h}=&\lim_{h\rightarrow 0}\frac{|\lambda_i|^2}{|\lambda_q|^2}\frac{V_{q}^2(h)+U_{q}^2(h)}{h}=0,\\
\lim_{h\rightarrow 0}\frac{\beta_{i,q}}{h}=&-2\re(\lambda_i),\\
\lim_{h\rightarrow 0}\frac{\gamma_{i,q}}{h}=&2\re(\lambda_q).
\end{align}
\end{subequations}
where $\alpha_{i,q}$, $\beta_{i,q}$, and $\gamma_{i,q}$ are defined in (\ref{albetaga}).
For $i\in \mathcal{V}_F$ and $q\in \mathcal{Q}$, the solutions to $\alpha_{i,q} \mu^2+\beta_{i,q}\mu+\gamma_{i,q}=0$ are given by $\bar{\mu}_{i,q}=\frac{-\beta_{i,q}+\sqrt{\beta_{i,q}^2-4\alpha_{i,q}\gamma_{i,q}}}{2\alpha_{i,q}}$ and $\underline{\mu}_{i,q}=\frac{-\beta_{i,q}-\sqrt{\beta_{i,q}^2-4\alpha_{i,q}\gamma_{i,q}}}{2\alpha_{i,q}}$, respectively. Under Assumption \ref{ass0}, all eigenvalues of $H$ have positive real parts, which leads to ${\textstyle\lim\limits_{h\rightarrow 0}\sqrt{\frac{\beta_{i,q}^2}{h^2}-4\frac{\alpha_{i,q}\gamma_{i,q}}{h^2}}=2\re(\lambda_i)}$. Based on this and (\ref{abc}), we have that
\begin{align*}
\lim\nolimits_{h\rightarrow 0}\bar{\mu}_{i,q}=&{\textstyle \lim_{h\rightarrow 0}\frac{-\beta_{i,q}/h+\sqrt{\beta_{i,q}^2-4\alpha_{i,q}\gamma_{i,q}}/h}{2\alpha_{i,q}/h}}\\
=&{\textstyle\frac{4\re(\lambda_i)}{0}}\\
=&+\infty
\end{align*}
and
\begin{align*}
  \lim\nolimits_{h\rightarrow 0}\underline{\mu}_{i,q}= & {\textstyle\lim_{h\rightarrow 0}\frac{2\gamma_{i,q}}{-\beta_{i,q}+\sqrt{\beta_{i,q}^2-4\alpha_{i,q}\gamma_{i,q}}}} \\
  =& {\textstyle\lim_{h\rightarrow 0}\frac{2\gamma_{i,q}/h}{-\beta_{i,q}/h+\sqrt{\beta_{i,q}^2-4\alpha_{i,q}\gamma_{i,q}}/h}} \\
  = & {\textstyle\frac{\re(\lambda_q)}{\re(\lambda_i)}}.
\end{align*}
Finally, $ \lim\limits_{h\rightarrow 0} \bigcap\limits_{q\in \mathcal{Q},~i\in \mathcal{V}_F}(\underline{\mu}_{i,q},\bar{\mu}_{i,q})=(\frac{\max_{q\in \mathcal{Q}}(\re(\lambda_q))}{\min_{i\in \mathcal{V}_F}(\re(\lambda_i))},+\infty)$. This completes the proof.
\end{proof}

On the one hand, the observer gain $\mu$ for the continuous-time distributed observer proposed in \cite{su2011cooperative} should be selected as an arbitrarily large value such that $\mu\in \big(\frac{\max_{q\in \mathcal{Q}}(\re(\lambda_q))}{\min_{i\in \mathcal{V}_F}(\re(\lambda_i))},+\infty\big)$ (See the proof of \cite[Theorem~1]{su2011cooperative}). Clearly, Corollary \ref{lemmaeigen} directly shows that Theorem \ref{radio} includes the result in \cite{su2011cooperative} as a special case when $h\rightarrow 0$. Furthermore, in our time-triggered distributed observer, $\mu$ cannot be selected arbitrarily, as revealed in Theorem \ref{radio}, due to the interaction among sampling periods, topologies, and system dynamics. Instead, to generate convergent error dynamics, $\mu$ needs to be designed deliberately by considering the interaction.

On the other hand, in the existing literature, the sampling period $h$ is usually selected as a sufficiently small number or restricted to some values on a fixed interval governed by linear matrix inequalities or the spectral radius/norm of the certain graph-induced matrix, e.g., \cite{cao2015event} and \cite{zheng2020periodic}. By contrast, a complete characterization of all possible sampling periods and their impact on the observability of the connected systems have been given in Theorem \ref{radio}. In the following corollary, explicit feasible sampling periods are further provided when the reference signal is a sinusoidal wave.

\begin{Corollary}\label{corral} Under Assumption \ref{ass0} and $\mathcal{Q}=\mathcal{Q}_3$, there exists a $\mu$ such that $F\left(\mu\right)$ is Schur if and only if $h$ satisfies the following conditions for $q\in \mathcal{Q}$ and $\kappa\in \mathds{Z}^{+}$:
     \begin{align} \label{h4Q3}\left\{
  \begin{array}{c}
    2\kappa\pi-2\pi<|\im(\lambda_q)| h<2\kappa\pi ,\\
2\kappa\pi-3\pi+2\theta_i<|\im(\lambda_q)|h<2\kappa\pi-\pi+2\theta_i,\\
2\kappa\pi-3\pi-2\theta_i<|\im(\lambda_q)|h<2\kappa\pi-\pi-2\theta_i.\\
     \end{array}
\right.\end{align}
  \end{Corollary}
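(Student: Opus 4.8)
The plan is to specialize Lemma \ref{radio} to the case $\mathcal{Q}=\mathcal{Q}_3$ and convert its abstract conditions \eqref{codn1}, \eqref{codn2+}, \eqref{codn3} into the explicit bands stated above. Since $\mathcal{Q}_1=\emptyset$, the magnitude condition \eqref{codn2} is vacuous. First I would dispose of the nonempty-intersection condition \eqref{codn3} by showing it is automatic once the phase condition \eqref{codn1} holds: for every $q\in\mathcal{Q}_3$ one has $\re(\lambda_q)=0$, hence $\gamma_{i,q}=0$, so $\alpha_{i,q}\mu^2+\beta_{i,q}\mu+\gamma_{i,q}$ factors as $\mu(\alpha_{i,q}\mu+\beta_{i,q})$. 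Under \eqref{codn1} we have $\cos(\psi_{i,q}(h))>0$, i.e.\ $\beta_{i,q}<0$, so the two roots in \eqref{solutionmu} reduce to $\underline{\mu}_{i,q}=0$ and $\bar{\mu}_{i,q}=-\beta_{i,q}/\alpha_{i,q}>0$. Consequently each feasible interval is $(0,\bar{\mu}_{i,q})$ and the intersection equals $\big(0,\min_{q,i}\bar{\mu}_{i,q}\big)\neq\emptyset$, so \eqref{codn3} follows from \eqref{codn1}. It therefore suffices to rewrite \eqref{codn1} and \eqref{codn2+} in terms of $|\im(\lambda_q)|h$.

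The next step is an explicit evaluation of $\phi_q(h)$ when $\re(\lambda_q)=0$. Writing $\omega=\im(\lambda_q)$, the quantities in \eqref{UVPT} become $U_q(h)=1-\cos(\omega h)\ge 0$ and $V_q(h)=\sin(\omega h)$, so $U_q(h)+jV_q(h)$ lies in the closed right half-plane and $\phi_q(h)\in[-\tfrac{\pi}{2},\tfrac{\pi}{2}]$. Using the half-angle identity $V_q(h)/U_q(h)=\cot(\omega h/2)$ gives $\tan\phi_q(h)=\tan(\tfrac{\pi}{2}-\tfrac{\omega h}{2})$; reducing modulo $\pi$ into $[-\tfrac{\pi}{2},\tfrac{\pi}{2}]$ yields, on each interval $\tfrac{\omega h}{2}\in(k\pi,(k+1)\pi)$, the closed form $\phi_q(h)=\tfrac{\pi}{2}-\tfrac{\omega h}{2}+k\pi$. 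Since $S$ is real, purely imaginary eigenvalues occur in conjugate pairs $\pm j|\omega|$ with $\theta_q=\pm\tfrac{\pi}{2}$, and because $U_q$, $V_q$ are respectively even and odd in $\omega$, the two arguments differ only by sign.

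I would then substitute into the phase condition \eqref{codn1}. For $\lambda_q=j|\omega|$ (so $\theta_q=\tfrac{\pi}{2}$), $\psi_{i,q}(h)=\theta_i+\phi_q(h)-\tfrac{\pi}{2}\in(-\tfrac{\pi}{2},\tfrac{\pi}{2})$; intersecting with $\tfrac{\omega h}{2}\in(k\pi,(k+1)\pi)$ and using $\theta_i\in(-\tfrac{\pi}{2},\tfrac{\pi}{2})$ (from $\re(\lambda_i)>0$ under Assumption \ref{ass0}) collapses this to $2k\pi<|\omega|h<2k\pi+\pi+2\theta_i$. For the conjugate $\lambda_q=-j|\omega|$ ($\theta_q=-\tfrac{\pi}{2}$, $\phi_q\mapsto-\phi_q$) the identical computation gives $2k\pi<|\omega|h<2k\pi+\pi-2\theta_i$. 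These are precisely the upper bounds of the second and third displayed inequalities with $\kappa=k+1$, while condition \eqref{codn2+} (equivalently $|\omega|h\neq 2\kappa\pi$) together with the band that fixes the index $k$ is captured by the first inequality $2\kappa\pi-2\pi<|\omega|h<2\kappa\pi$. Intersecting the three intervals for the common index $\kappa=k+1$ reproduces the sharp band $2k\pi<|\omega|h<2k\pi+\pi-2|\theta_i|$, and conversely any feasible $h$ lies in exactly one such band for a unique $\kappa\in\mathds{Z}^+$; quantifying over all $q\in\mathcal{Q}$ and $i\in\mathcal{V}_F$ then yields the stated equivalence.

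The main obstacle I anticipate is the bookkeeping of $\phi_q(h)$: correctly reducing $\tfrac{\pi}{2}-\tfrac{\omega h}{2}$ modulo $\pi$ onto the right index $k$, and then merging the two conjugate phase constraints with the non-zero spectral-mapping condition so that the three displayed blocks, read as a single system for a common $\kappa$, collapse \emph{exactly} to $(2k\pi,\,2k\pi+\pi-2|\theta_i|)$ rather than a looser or tighter set. Verifying that the open lower endpoint supplied by the first block and the upper endpoints supplied by the second and third blocks align, so that no spurious restriction is introduced, is the delicate part; everything else is the routine specialization of Lemma \ref{radio}.
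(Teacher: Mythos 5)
Your proposal is correct and follows essentially the same route as the paper: specialize Lemma \ref{radio} to $\mathcal{Q}=\mathcal{Q}_3$, evaluate $\phi_q(h)=\arctan\bigl(\cot(\im(\lambda_q)h/2)\bigr)$ in closed form via the half-angle identity, and translate the phase and non-zero-spectral-mapping conditions for $\theta_q=\pm\frac{\pi}{2}$ into the three displayed bands. Your explicit check that condition \eqref{codn3} is automatic here (since $\gamma_{i,q}=0$ makes every feasible interval of the form $(0,\bar{\mu}_{i,q})$) is a detail the paper's proof leaves implicit, and is a welcome addition.
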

\begin{proof}
 For $q\in \mathcal{Q}=\mathcal{Q}_3$, $\operatorname{Re}(\lambda_q)=0$ implies $\theta_q=\pm\frac{\pi}{2}$, $U_q(h)=1-\cos(\im(\lambda_q)h)$, and $V_q(h)=\sin(\im(\lambda_q)h)$. Then, simple manipulations give that conditions (\ref{codn1})-(\ref{codn2+}) in Theorem \ref{radio} are equivalent to
 \begin{align}\label{intervalphithe}
 \left\{\begin{array}{c}
 |\im(\lambda_q)|h\neq 2\kappa_1 \pi,~~~~\kappa_1\in \mathds{Z}^{+}\\
 0< \phi_{q}(h)+\theta_i< \pi,~~~~ \text{when}~~\theta_q=\frac{\pi}{2},\\
 -\pi< \phi_{q}(h)+\theta_i< 0,~~~~\text{when}~~ \theta_q=-\frac{\pi}{2}.
 \end{array}\right.
 \end{align}
Since $U_{q}>0$ under these conditions, we have that $\phi_{q}(h)\in (-\frac{\pi}{2},\frac{\pi}{2})$ and
\begin{align}
\phi_{q}(h)&=\arctan\frac{V_{q}(h)}{U_{q}(h)}\nonumber\\
=&\arctan\frac{\sin(\im(\lambda_q)h)}{1-\cos(\im(\lambda_q)h)}\nonumber\\
=& \arctan \big[\cot(\frac{\im(\lambda_q)}{2}h)\big]\nonumber\\
=& \arctan \big[\tan(\kappa\pi+\frac{\pi}{2}-\frac{\im(\lambda_q)}{2}h)\big], ~~~~\kappa\in \mathds{Z},
\end{align}
which further leads to
\begin{align}
\left\{\begin{array}{c}
-\frac{\pi}{2}<\kappa_2\pi+\frac{\pi}{2}-\frac{|\im(\lambda_q)|}{2}h< \frac{\pi}{2},\\
0<\kappa_2\pi+\frac{\pi}{2}-\frac{|\im(\lambda_q)|}{2}h+\theta_i<\pi,
\end{array}\right. \theta_q=\frac{\pi}{2},~\kappa_2\in \mathds{Z},\nonumber
\end{align}
and
\begin{align}
\left\{\begin{array}{c}
-\frac{\pi}{2}<\kappa_3\pi+\frac{\pi}{2}+\frac{|\im(\lambda_q)|}{2}h< \frac{\pi}{2},\\
-\pi<\kappa_3\pi+\frac{\pi}{2}+\frac{|\im(\lambda_q)|}{2}h+\theta_i<0,
\end{array}\right. \theta_q=-\frac{\pi}{2},~\kappa_3\in \mathds{Z}.\nonumber
\end{align}
Through some simple algebraic manipulations, these two conditions can be further simplified to
(1) for $\theta_q=\frac{\pi}{2}$ and $\kappa_2\in \{0,1,2,\cdots\}$,
\begin{align}
\left\{\begin{array}{c}
2\kappa_2\pi<|\im(\lambda_q)|h< 2\kappa_2\pi+2\pi,\\
2\kappa_2\pi-\pi+2\theta_i<|\im(\lambda_q)|h<2\kappa_2\pi+\pi+2\theta_i,
\end{array}\right.\nonumber
\end{align}
and (2) for $\theta_q=-\frac{\pi}{2}$ and $\kappa_3\in \{-1,-2,\cdots\}$,
\begin{align}
\left\{\begin{array}{c}
-2\kappa_3\pi-2\pi<|\im(\lambda_q)|h<-2\kappa_3\pi,\\
-2\kappa_3\pi-3\pi-2\theta_i<|\im(\lambda_q)|h<-2\kappa_3\pi-\pi-2\theta_i.
\end{array}\right.\nonumber
\end{align}
It is obvious that conditions (1) and (2) naturally include the condition $|\im(\lambda_q)|h\neq 2\kappa_1 \pi,~\kappa_1\in \mathds{Z}^{+}$, and are equivalent to the conditions in (\ref{h4Q3}).

From \eqref{alphaq3q2q4}, we have $\alpha_{i,q}=2\frac{|\lambda_i|^2}{|\lambda_q|^2}\big(1-\cos(\im(\lambda_q)h)\big)$ and $$\beta_{i,q}=\left\{\begin{array}{c}-\frac{|\lambda_i|\sqrt{2-2\cos(\im(\lambda_q)h)}}{|\lambda_q|}\sin( \phi_{q}(h)+\theta_i),~~\theta_q=\frac{\pi}{2},\\
\frac{|\lambda_i|\sqrt{2-2\cos(\im(\lambda_q)h)}}{|\lambda_q|}\sin( \phi_{q}(h)+\theta_i),~~ \theta_q=-\frac{\pi}{2},\\
\end{array}\right.\nonumber$$
which implies $\alpha_{i,q}>0$ and $\beta_{i,q}<0$ for any $q\in \mathcal{Q}=\mathcal{Q}_3$ and $i\in \mathcal{V}_F$. Therefore, condition (\ref{codn3}) in Theorem \ref{radio} always holds with $\underline{\mu}_{i,q}=0$ and $\bar{\mu}_{i,q}=-\frac{\beta_{i,q}}{\alpha_{i,q}}>0$ for any $q\in \mathcal{Q}=\mathcal{Q}_3$ and $i\in \mathcal{V}_F$. This completes the proof.
\end{proof}
{\myb
\subsection{An approach to cancel the effect of imaginary parts}
The imaginary parts of the leader-following network-induced matrix $H$ play an important role in selecting the sampling period and generate more possibility of a pathological sampling period. In order to cancel the effect of imaginary parts induced by the network, we now modify \eqref{compensator} as
\begin{equation}
\dot{\eta}_i(t)=S\eta_i(t) + \mu d_i\sum\nolimits_{j\in\mathcal{N}_i}{(\hat{\eta}_j(t)-\hat{\eta}_i(t))},\label{compensatorcancel}
\end{equation}
where $d_i$ is chosen such that all the eigenvalues of $DH$ are real, positive, and distinct with $D=\textnormal{diag}(d_1,\cdots,d_N)$. The existence of $D$ has been shown in \cite[Lemma 4]{wang2020adaptive} under Assumption \ref{ass0}. Then, system \eqref{solution1} can be correspondingly modified into the following system:
\begin{equation}\label{solution1cancel}
\tilde{\eta}(t_{k+1})=F\left(\mu\right)\tilde{\eta}(t_k),
\end{equation}
where
\begin{align*}
F\left(\mu\right)&=I_N\otimes e^{Sh}-\mu DH\otimes \int_{0}^{h}e^{S\tau}d\tau.
\end{align*}
By replacing $H$ with $DH$ in Theorem \ref{radio}, we have the following corollary.
\begin{Corollary} Under Assumption \ref{ass0}, there exists a $\mu>0$ such that $F\left(\mu\right)$ is Schur if and only if $h$ satisfies the conditions below:
\begin{enumerate}[(I)]
\item \label{noncodn1} Phase condition:
   $$(\phi_{q}(h) -\theta_q)\in(-\frac{\pi}{2}, \frac{\pi}{2}),~~~~q\in \mathcal{Q}_1\cup Q_3~~\textnormal{and}~~i\in\mathcal{V}_F;$$
\item \label{noncodn2}  Magnitude condition: $$e^{2\re(\lambda_q)h}<\csc^2(\phi_{q}(h) -\theta_q),~~~~q\in \mathcal{Q}_1~~\textnormal{and}~~i\in\mathcal{V}_F;$$
\item \label{noncodn2+} Non-zero spectral mapping: $$h\neq \frac{2\kappa\pi}{\im(\lambda_q)},~~~~q\in \mathcal{Q}_3~~\textnormal{and}~~i\in\mathcal{V}_F,$$ where $\kappa \in \mathds{Z}$ and $\frac{\kappa}{\im(\lambda_q)}>0;$
\item \label{noncodn3} $\bigcap\limits_{q\in \mathcal{Q},~i\in \mathcal{V}_F}(\underline{\mu}_{i,q},\bar{\mu}_{i,q})\neq \emptyset$,
\end{enumerate}
where
\begin{align}
\bar{\mu}_{i,q}=&{\textstyle |\lambda_q|\frac{\cos(\phi_{q}(h) -\theta_q)+\sqrt{e^{-2\re(\lambda_q)h}-\sin^2(\phi_{q}(h) -\theta_q)}}{\lambda_i(DH)e^{-\re(\lambda_q)h}\sqrt{V_{q}^2(h)+U_{q}^2(h)}}},\nonumber\\
\underline{\mu}_{i,q}=&{\textstyle |\lambda_q|\frac{\cos(\phi_{q}(h) -\theta_q)-\sqrt{e^{-2\re(\lambda_q)h}-\sin^2(\phi_{q}(h) -\theta_q)}}{\lambda_i(DH)e^{-\re(\lambda_q)h}\sqrt{V_{q}^2(h)+U_{q}^2(h)}}}.\label{solutionmu}
\end{align}
\end{Corollary}
The proof is similar with the proof of Theorem \ref{radio} by replacing $\lambda_i(H) $ and $\theta_i(H)$ with $\lambda_i(DH)$ and $\theta_i(DH)=0$, respectively. Thus, the proof is omitted.
}
\subsection{The convergence of the time- and event-triggered error dynamics}
Regarding the convergence of the estimation errors $\tilde{\eta}_i$ with the triggering function in (\ref{setrigger}), we have the following theorem.
\begin{Theorem}\label{lemma4} Under Assumption \ref{ass0}, there exists a $\mu>0$ such that the estimation error $\tilde{\eta}(t)$ of the mixed time- and event-triggered distributed observer converges for any event-triggered mechanism in the form of \eqref{setrigger} if $h$ satisfies the conditions in Theorem \ref{radio}. Furthermore, all feasible $\mu>0$ are in the set $\bigcap\limits_{q\in \mathcal{Q},~i\in \mathcal{V}_F}(\underline{\mu}_{i,q},\bar{\mu}_{i,q})$, where $\underline{\mu}_{i,q}$ and $\bar{\mu}_{i,q}$ are defined in \eqref{solutionmu}. {\myr Moreover, if $f_i(t_k)$ tends to zero (exponentially), so does $\tilde{\eta}(t)$.}
\end{Theorem}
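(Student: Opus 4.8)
The plan is to reduce the convergence of the full mixed system \eqref{solution1} to the Schur property of $F(\mu)$ already characterized in Lemma \ref{radio}, using the technical Lemma \ref{lemmaaymmk} to absorb the event-triggered forcing term $G(\mu)\bar{\eta}(t_k)$, and to treat the two directions of the equivalence separately.

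For sufficiency I would assume $h$ satisfies conditions \eqref{codn1}--\eqref{codn3}. By Lemma \ref{radio}, condition \eqref{codn3} makes the intersection $\bigcap\limits_{q\in\mathcal{Q},i\in\mathcal{V}_F}(\underline{\mu}_{i,q},\bar{\mu}_{i,q})$ nonempty, and any $\mu$ drawn from it renders $F(\mu)$ Schur. The key step is to show that the mechanism \eqref{setrigger} forces $\bar{\eta}(t_k)\to0$: by \eqref{setrigger2} a new event of follower $i$ is declared at the first sampling instant where $\|\bar{\eta}_i(t_k)\|>f_i(t_k)$, and \eqref{exosystem1} then resets $\hat{\eta}_i$ so that $\bar{\eta}_i=0$; hence $\|\bar{\eta}_i(t_k)\|\le f_i(t_k)$ at every sampling instant. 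Since each $f_i$ is positive, bounded, and converges to zero, $\bar{\eta}(t_k)\to0$, so $G(t_k):=G(\mu)\bar{\eta}(t_k)\to0$, and Lemma \ref{lemmaaymmk} gives $\tilde{\eta}(t_k)\to0$ for every initial condition and every admissible mechanism. To pass from the sampling instants to all $t$, I would write the variation-of-constants solution of \eqref{reeq1} on $[t_k,t_{k+1})$,
\begin{equation*}
\tilde{\eta}(t)=e^{(I_N\otimes S)(t-t_k)}\tilde{\eta}(t_k)-\mu\Big(\int_{t_k}^{t}e^{(I_N\otimes S)(t-\tau)}d\tau\Big)(H\otimes I_n)\big(\tilde{\eta}(t_k)-\bar{\eta}(t_k)\big),
\end{equation*}
and observe that, since $0\le t-t_k<h$ is uniformly bounded, the exponential and the integral are bounded uniformly in $k$, so $\sup_{t\in[t_k,t_{k+1})}\|\tilde{\eta}(t)\|$ is dominated by a constant times $\|\tilde{\eta}(t_k)\|+\|\bar{\eta}(t_k)\|\to0$, yielding $\tilde{\eta}(t)\to0$.

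For necessity I would exploit the phrase \emph{for any} event-triggered mechanism. The class \eqref{setrigger} contains the mechanism obtained by taking each $f_i$ small enough that the bound is violated at every step; this forces $s^i_l\equiv1$, whence $\bar{\eta}(t_k)\equiv0$ and \eqref{solution1} collapses to the purely time-triggered dynamics \eqref{solutionsa1}, $\tilde{\eta}(t_{k+1})=F(\mu)\tilde{\eta}(t_k)$. Convergence from arbitrary initial data then requires $F(\mu)$ to be Schur, which by Lemma \ref{radio} is possible only if $h$ satisfies \eqref{codn1}--\eqref{codn3}. The same reasoning pins down the feasible gains: the proof of Lemma \ref{radio} shows $F(\mu)$ is Schur exactly when $\alpha_{i,q}\mu^2+\beta_{i,q}\mu+\gamma_{i,q}<0$ for all $q,i$, i.e. exactly when $\mu\in\bigcap\limits_{q\in\mathcal{Q},i\in\mathcal{V}_F}(\underline{\mu}_{i,q},\bar{\mu}_{i,q})$; combined with the sufficiency argument, this intersection is precisely the set of gains that deliver convergence for every mechanism.

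The step I expect to be the main obstacle is the necessity argument, specifically justifying that the worst case $s^i_l\equiv1$ is genuinely realizable inside \eqref{setrigger} with a \emph{positive} $f_i$ decaying to zero and that it exposes any non-Schur $F(\mu)$; one must rule out degenerate trajectories (for instance those with $\hat{\eta}_{ei}\equiv0$, along which $\bar{\eta}_i$ would stay zero and the reduction to \eqref{solutionsa1} fail). By contrast, the estimate $\bar{\eta}(t_k)\to0$ is read directly off \eqref{setrigger2}, and the continuous-time extension is routine once $\tilde{\eta}(t_k)$ and $\bar{\eta}(t_k)$ are known to vanish.
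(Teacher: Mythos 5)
Your proof follows essentially the same route as the paper's. For sufficiency, the paper likewise invokes Lemma \ref{radio} to obtain a Schur $F(\mu)$ for $\mu$ in the stated intersection, reads $\|\bar{\eta}(t_k)\|\le\sqrt{N}\max_i f_i(t_k)\rightarrow 0$ directly off the mechanism \eqref{setrigger} after the update \eqref{exosystem1}, applies Lemma \ref{lemmaaymmk} to \eqref{solution1}, and closes with the inter-sample bound $\|\tilde{\eta}(t)\|\le e^{\|S\|h}\big[(1+\mu h\|H\|)\|\tilde{\eta}(t_k)\|+\mu h\|H\|\|\bar{\eta}(t_k)\|\big]$, which is exactly your variation-of-constants estimate. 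The only place you diverge is the necessity direction: the paper dispatches it in one sentence (``if $\lim_{t\rightarrow\infty}\tilde{\eta}(t)=0$, then $F(\mu)$ must be Schur'') with no construction at all, whereas you try to realize the worst case $s^i_l\equiv 1$ inside the class \eqref{setrigger} so as to collapse \eqref{solution1} to \eqref{solutionsa1}. The obstacle you flag there is real --- a positive $f_i$ cannot force an event at an instant where $\bar{\eta}_i(t_k^-)=0$, e.g.\ along trajectories with $\hat{\eta}_{ei}\equiv 0$ --- but this is a gap the paper silently sidesteps rather than one you have introduced; your attempt is, if anything, more explicit than the published argument on this point.
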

\begin{proof}
If $h$ satisfies the conditions in Theorem \ref{radio}, then, from Theorem \ref{radio}, there exists a $\mu\in \bigcap\limits_{q\in \mathcal{Q},~i\in \mathcal{V}_F}(\underline{\mu}_{i,q},\bar{\mu}_{i,q})$ such that $F(\mu)$ is Schur. The triggering function in \eqref{setrigger} enforces
$\left\| \bar{\eta}(t_k)\right\|
\leq \sqrt{N} f(t_k)$ after applying the updating law in (\ref{exosystem1}),
where $f(t_k)=\max\left\{f_1(t_k),\cdots,f_N(t_k)\right\}$. Based on this and the convergence of $f_i(t_k)$, we have that
\begin{equation}\label{yitabarc0}
\lim\limits_{t_k\rightarrow\infty}\bar{\eta}(t_k)=0.
\end{equation}
Applying \citep*[Lemma~3.1]{yan2017cooperative} together with this to \eqref{solution1} yields that
\begin{equation}\label{yitatildec0}
\lim\limits_{t_k\rightarrow\infty}\tilde{\eta}(t_k)=0.
\end{equation}
Then, from equation \eqref{reeq1}, we have that, $\forall t\in [t_{k},t_{k+1})$,
\begin{align}
\|\tilde{\eta}(t)\|\leq& e^{\|S\|h}\big[(1+\mu  h\|H\|) \|\tilde{\eta}(t_k)\|+ \mu  h\|H\|\|\bar{\eta}(t_k)\|\big].\nonumber
\end{align}
Therefore, if follows from (\ref{yitabarc0}) and (\ref{yitatildec0}) that $\lim\limits_{t\rightarrow\infty}\tilde{\eta}(t)=0$.

On the contrary, if $\lim\limits_{t\rightarrow\infty}\tilde{\eta}(t)=0$, then $F(\mu)$ must be Schur, and thus $h$ satisfies the conditions in Theorem \ref{radio}.

{\myr In particular, since $f_i(t_k)$ tends to zero(exponentially), from Lemma.~\ref{lemmaaymmk} and equation \ref{solution1}, we have $\lim\limits_{t_k\rightarrow\infty}\tilde{\eta}(t_k)=0$ exponentially. Hence, $\tilde{\eta}(t)$ tends to zero(exponentially).}
\end{proof}
\section{Inter-event analysis}\label{intereventstep}
{\myr Just as \cite{liu2015event} pointed out, the numerical simulation's step-length guarantees strictly positive minimum inter-event times and excludes Zeno behavior. While \cite{liu2015event} still observed that the inter-event times converge to the step length of the numerical simulation in finite time. Hence, it is of practical interest and deserves more effort to investigate the analytical characterization of the relationship among sampling, event triggering, and inter-event behavior. }

Under Assumption \ref{ass0}, there exists a $\mu$ such that $F(\mu)$ is Schur for any $h$ satisfying the conditions in Theorem \ref{radio}. For such a matrix, there are constants $\beta(h,\mu)>0$ and $0\leq \gamma(h,\mu)<1 $ such that $\|F^{k}(\mu)\|\leq \beta(h,\mu) \gamma^k(h,\mu)$, where $\gamma(h,\mu)$ is the spectral radius of $F(\mu)$. In the remaining analysis, $\mu$ is supposed to be fixed, and $h$ and $\mu$ in $\beta(h,\mu)$ and $\gamma(h,\mu)$ are omitted for notational simplicity. To analyze the number of steps between two consecutive triggering instants, the triggering function for each follower $i$ is assumed to be bounded by exponential functions, that is, \begin{equation}\label{setriggerex}\sigma_m e^{-\alpha t_k }\leq  f_i(t_k)\leq \sigma_M e^{-\alpha t_k },\end{equation}where $\sigma_m >0$, $\sigma_M>0$, and $\alpha>0$, for $i\in \mathcal{V}_F$. 
Given the triggering instant $t_{l}^{i}$, suppose that the next triggering instant is $t_k=t_{l}^{i}+s^i_{l} h$. Then, the solution of the system in \eqref{compensator} during the time interval $[t_{l}^{i},t_{l}^{i}+s^i_{l} h)$ can be evaluated as
\begin{equation}
\eta_i(t_{k}) = {\textstyle  e^{Ss^i_{l}h}\eta_i(t_{l}^{i})+ \mu\sum\nolimits_{r=1}^{s^i_{l}}\int_{t_{k-r}}^{t_{k-r+1}}e^{S (t_k-\tau)}\hat{\eta}_{ei}(t_{k-r})d\tau}.\nonumber
\end{equation}
Consider equation \eqref{exosystem1} during the time interval $[t_{l}^{i},t_{l}^{i}+s^i_{l} h)$. Then, we have that
\begin{align}\label{solutionerrxx2}
\bar{\eta}_i(t_k^{-}) =&{\textstyle e^{S s^i_{l} h}\bar{\eta}_i(t_{l}^{i})}\nonumber\\
&{\textstyle+ \mu \sum\nolimits_{r=1}^{s^i_{l}}\int_{t_{k-r}}^{t_{k-r+1}}e^{S (t_k-\tau)}\hat{\eta}_{ei}(t_{k-r})d\tau}.\end{align}
where $\bar{\eta}_i(t_k^{-})$ denotes the value of $\bar{\eta}_i(t)$ at $t_k$ before applying the updating law in (\ref{exosystem1}). In the following theorem, we establish a relationship among the inter-event steps $s^i_{l}$, the sampling period $h$, and the decay rate of the bounding functions $\alpha$. Before proceeding, we define some variables below
\begin{align}
\chi_1\triangleq & \mu\|H\|\sqrt{N}\sigma_M,~~~~\chi_2(h)\triangleq{\textstyle  \frac{ \chi_1 \beta \|G(\mu)\|e^{\alpha h}}{1-\gamma e^{\alpha h}}},\nonumber\\
\chi_3(k,h)\triangleq&   {\textstyle \mu\|H\| \beta\|\tilde{\eta}(t_{0})\| (\gamma e^{\alpha h})^{k} +\chi_2(h)[1-(\gamma e^{\alpha h})^{k}]}.\nonumber\end{align}
\subsection{Inter-event steps}
\begin{Theorem}\label{cor5} Under Assumption \ref{ass0}, $s^i_{l}$ has the following properties for any event-triggered mechanism in the form of \eqref{setrigger} with the triggering function in (\ref{setriggerex}):
\begin{enumerate}
  \item (Time-varying bound) there exists a positive $s(k,h)$ such that
\begin{equation}\label{sitriggersin}
s^i_{l}> s(k,h),~~~~k\in \mathds{N};
\end{equation}
  \item (Asymptotic bound) there exists a positive constant $s^{*}\left(h\right)$ such that
 \begin{align}\label{sitriggersin2}
\lim_{k\rightarrow\infty}s(k,h)= \left\{\begin{array}{cc}
  s^{*}\left(h\right),~~~~&e^{\alpha h}\gamma<1,\\
0,~~~~&e^{\alpha h}\gamma\geq 1,
 \end{array}\right.
 \end{align}
\end{enumerate}
where $s(k,h)$ and $s^*\left(h\right)$ are given by the following equations in $s(k,h)$ and $s^{*}$, respectively,
 \begin{subequations}\begin{align}
\frac{e^{\|S\|s(k,h) h}-1}{\|S\|}  =& \frac{\sigma_m }{\gamma^{-s(k,h)}\chi_3(k,h)+ \chi_1 e^{\alpha hs(k,h)} },\label{solutionS}\\
\frac{e^{\|S\|s^{*}(h) h}-1}{\|S\|}  =& \frac{\sigma_m}{\gamma^{-s^{*}(h)}\chi_2(h) + \chi_1 e^{\alpha hs^{*}(h)} }.\label{chiii}
\end{align} \end{subequations}
\end{Theorem}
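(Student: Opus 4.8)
The plan is to bound the absolute auxiliary error $\bar\eta_i$ from above over one inter-event interval and compare this bound against the \emph{lower} envelope $\sigma_m e^{-\alpha t_k}$ of the triggering function: as long as the upper bound stays below $\sigma_m e^{-\alpha t_k}$, the condition $\|\bar\eta_i(t_k)\|>f_i(t_k)$ in \eqref{setrigger2} cannot be met for \emph{any} admissible $f_i$ satisfying \eqref{setriggerex}, so no event fires, and the first sampling index at which the two curves cross furnishes the lower bound $s(k,h)$. The starting point is that at a triggering instant $t_l^i$ the update law \eqref{exosystem1} resets $\bar\eta_i(t_l^i)=0$, so the representation \eqref{solutionerrxx2} collapses to the convolution $\bar\eta_i(t_k^-)=\mu\sum_{r=1}^{s}\int_{t_{k-r}}^{t_{k-r+1}}e^{S(t_k-\tau)}\hat\eta_{ei}(t_{k-r})\,d\tau$ with $t_k=t_l^i+sh$.

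First I would bound the two signals feeding the convolution. The triggering rule already gives $\|\bar\eta(t_m)\|\le\sqrt N\,\sigma_M e^{-\alpha t_m}$, hence $\mu\|H\|\,\|\bar\eta(t_m)\|\le\chi_1 e^{-\alpha t_m}$. For $\tilde\eta$ I would iterate the discrete dynamics \eqref{solution1} by discrete variation of constants, $\tilde\eta(t_m)=F^m(\mu)\tilde\eta(t_0)+\sum_{j=0}^{m-1}F^{m-1-j}(\mu)G(\mu)\bar\eta(t_j)$, and apply $\|F^k(\mu)\|\le\beta\gamma^k$ together with the $\bar\eta$ envelope. The resulting forcing sum is geometric with ratio $\gamma e^{\alpha h}$; summing it and regrouping the constants into $\chi_1,\chi_2(h),\chi_3(k,h)$ yields the key envelope $\mu\|H\|\,\|\tilde\eta(t_m)\|\le\chi_3(m,h)\,e^{-\alpha t_m}$. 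Bounding $\|\hat\eta_{ei}\|\le\|H\|(\|\tilde\eta\|+\|\bar\eta\|)$ from \eqref{relaH} then controls the integrand at every step.

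Next I would collapse the convolution. Substituting $u=t_k-\tau$ gives $\|\int_{t_{k-r}}^{t_{k-r+1}}e^{S(t_k-\tau)}d\tau\|\le\int_{(r-1)h}^{rh}e^{\|S\|u}du$, whose telescoping sum over $r=1,\dots,s$ is exactly $\frac{e^{\|S\|sh}-1}{\|S\|}$, the left-hand side of \eqref{solutionS}. Pulling the envelopes out with the worst-case factors $e^{\alpha rh}\le e^{\alpha sh}$ and $\gamma^{-r}\le\gamma^{-s}$ and regrouping the constants produces $\|\bar\eta_i(t_k^-)\|\le e^{-\alpha t_k}\frac{e^{\|S\|sh}-1}{\|S\|}\big[\gamma^{-s}\chi_3(k,h)+\chi_1 e^{\alpha hs}\big]$. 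Setting this equal to the admissible floor $\sigma_m e^{-\alpha t_k}$ and cancelling the common factor $e^{-\alpha t_k}$ gives precisely the transcendental equation \eqref{solutionS}. For Property 1 I would observe that its left-hand side is continuous and strictly increasing from $0$, while the right-hand side is positive and strictly decreasing (its denominator increases in $s$ whenever $\chi_3(k,h)>0$); hence there is a unique positive root $s(k,h)$, and for every integer step below it the upper bound stays under the floor, forcing $s_l^i>s(k,h)$.

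For Property 2 I would pass to the limit $k\to\infty$ in \eqref{solutionS} using $\chi_3(k,h)=\chi_2(h)+[\,\mu\|H\|\beta\|\tilde\eta(t_0)\|-\chi_2(h)](\gamma e^{\alpha h})^{k}$. When $\gamma e^{\alpha h}<1$ the factor $(\gamma e^{\alpha h})^{k}\to0$, so $\chi_3(k,h)\to\chi_2(h)>0$ and, by continuous dependence of the root on this coefficient, $s(k,h)\to s^*(h)$ solving \eqref{chiii}. When $\gamma e^{\alpha h}\ge1$ that factor does not vanish: $\chi_2(h)$ becomes non-positive and $\chi_3(k,h)\to+\infty$ (the boundary case $\gamma e^{\alpha h}=1$ yielding a linear-in-$m$ growth of the forcing sum), which drives the right-hand side of \eqref{solutionS} to $0$ and hence $s(k,h)\to0$. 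The main obstacle is the middle step: taming the mixed-rate convolution in which the exponentially growing kernel $e^{S(t_k-\tau)}$ meets two envelopes decaying at the distinct rates $\gamma^{m}$ and $e^{-\alpha t_m}$, and collapsing it into the clean closed form \eqref{solutionS} while keeping the inequality valid; a secondary subtlety is that the candidate index $k$ depends on the very step $s$ being bounded, so some care is needed to ensure $s(k,h)$ is a genuine lower bound rather than a circular definition.
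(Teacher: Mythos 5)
Your proposal is correct and follows essentially the same route as the paper: reset $\bar{\eta}_i(t_l^i)=0$, bound $\tilde{\eta}(t_m)$ by iterating \eqref{solution1} with $\|F^k(\mu)\|\leq\beta\gamma^k$ and the envelope $\|\bar{\eta}(t_m)\|\leq\sqrt{N}\sigma_M e^{-\alpha t_m}$, feed both through \eqref{relaH} into the convolution \eqref{sorxx3}, extract the worst-case factors $\gamma^{-r}\leq\gamma^{-s}$ and $e^{\alpha hr}\leq e^{\alpha hs}$ to obtain the bound whose comparison with $\sigma_m e^{-\alpha t_k}$ yields \eqref{solutionS}, and then argue existence and uniqueness of the root by the monotonicity of the two sides and take $k\to\infty$ in the two cases $\gamma e^{\alpha h}<1$ and $\gamma e^{\alpha h}\geq1$. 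The constants you assemble ($\chi_1$, $\chi_2(h)$, $\chi_3(k,h)$) and the telescoping of the kernel integral into $\frac{e^{\|S\|sh}-1}{\|S\|}$ match the paper's derivation exactly.
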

\begin{proof}
Simple calculation from \eqref{solution1} gives that
\begin{equation}
\|\tilde{\eta}(t_k)\|\leq{\textstyle \beta\gamma^k\|\tilde{\eta}(t_0)\|+\sum\limits_{r=0}^{k-1}\beta\gamma^{k-r-1}\|G(\mu)\|\|\bar{\eta}(t_r)\|}.\nonumber
\end{equation}
The event-triggered mechanism in \eqref{setrigger} enforces $\|\bar{\eta}(t_r)\|\leq  \sqrt{N}\sigma_M e^{-\alpha h r}$ after applying the updating law in (\ref{exosystem1}).
Then,
\begin{equation}
\|\tilde{\eta}(t_{k})\|
\leq{\textstyle \beta\gamma^k\|\tilde{\eta}(t_{0})\|+\beta\sigma_M\sqrt{N}\|G(\mu)\|\gamma^{k-1}\sum\limits_{r=0}^{k-1} \frac{e^{-\alpha hr}}{\gamma^r}}\nonumber
\end{equation}
From equations \eqref{relaH}, it follows that
\begin{align}\label{chiexponen}
\left\|\hat{\eta}_{ei}\left(t_{k}\right)\right\|\leq \left\|\hat{\eta}_{e}\left(t_{k}\right)\right\|
\leq \|H\|\left(\|\tilde{\eta}(t_k)\|+\|\bar{\eta}(t_k)\|\right).
\end{align}
Obviously, $\bar{\eta}_i(t_{l}^{i})=0$, which implies together with \eqref{solutionerrxx2} that
\begin{equation}\label{sorxx3}
\bar{\eta}_i(t_k^{-}) ={\textstyle \mu \sum\nolimits_{r=1}^{s^i_{l}}\int_{t_{k-r}}^{t_{k-r+1}}e^{S (t_k-\tau)}\hat{\eta}_{ei}(t_{k-r})d\tau}.
\end{equation}
For any positive $r\leq s^i_{l}$, it is noted that $\gamma^{k-r-1}\leq \gamma^{k-s^i_{l}-1}$, $\gamma^{k-r}\leq \gamma^{k-s^i_{l}}$, and $\sum\nolimits_{v=0}^{k-r-1}\frac{e^{-\alpha hv}}{\gamma^v}\leq \sum\nolimits_{v=0}^{k-1}\frac{e^{-\alpha hv}}{\gamma^v}$.
Then, from equations \eqref{chiexponen}, \eqref{sorxx3}, and $\|\bar{\eta}(t_r)\|\leq  \sqrt{N}\sigma_M e^{-\alpha h r}$, it follows that
\begin{align}\label{solutionerrxx3}
\|\bar{\eta}_i(t_{k}^{-})&\|\leq{\textstyle \mu \sum\nolimits_{r=1}^{s^i_{l}}\int_{t_{k-r}}^{t_{k-r+1}}\|e^{S (t_k-\tau)}\|\left\|\hat{\eta}_{e}\left(t_{k-r}\right)\right\|d\tau}\nonumber\\
\leq& {\textstyle \int_{0}^{s^i_{l}h} \|e^{S\tau}\| d\tau\big[\gamma^{-s^i_{l}}\chi_3(k,h) + \chi_1 e^{\alpha h s^i_{l}} \big]e^{-\alpha h k}}\\
\leq& {\textstyle \int_{0}^{s^i_{l}h} e^{\|S\|\tau} d\tau\big[\gamma^{-s^i_{l}}\chi_3(k,h) + \chi_1 e^{\alpha h s^i_{l}} \big]e^{-\alpha h k}}.\nonumber
\end{align}
On the other hand,
\begin{equation}\label{leftinexxsin}\|\bar{\eta}_i(t_k^{-})\| =\|\bar{\eta}_i(t_{l}^{i}+s^i_{l} h)\|> f_{i}(t_k) \geq\sigma_m e^{-\alpha h k }.\end{equation}
Combining \eqref{solutionerrxx3} and \eqref{leftinexxsin}  yields that
\begin{equation}\label{solutionofS} \frac{e^{\|S\|s^i_{l}h}-1}{\|S\|} > \frac{\sigma_m }{ \gamma^{-s^i_{l}}\chi_3(k,h)+ \chi_1 e^{\alpha hs^i_{l}} },~~~~\|S\|\neq 0,
\end{equation}
When $\|S\|=0$, it is easy to show the left side of (\ref{solutionofS}) can be evaluated using the liming process $\lim\limits_{\|S\|\rightarrow 0}\frac{e^{\|S\|s^i_{l}h}-1}{\|S\|}=s^i_{l}h$. Obviously, the left hand side of \eqref{solutionofS} is increasing in $s^i_{l}$, and the right hand side of \eqref{solutionofS} is decreasing in $s^i_{l}$, since $e^{\|S\|s^i_{l}h} $, $\gamma^{-s^i_{l}}$, and $ e^{\alpha hs^i_{l}}$ are increasing in $s^i_{l}$. Also, when $s^i_{l}=0$, the left hand side equals $0$ and the right hand side is $\frac{\sigma_m }{\chi_1 + \chi_3(k,h) }> 0$. As a result, there exists a unique positive $s(k,h)$ such that $s^i_{l}>s(k,h)$, where $s(k,h)$ can be found by solving \eqref{solutionS}.

The solution to \eqref{solutionS} is dependent on $h$ and $k$, making the lower bound time-varying. To derive a time-invariant bound independent of $k\in \mathds{N}$, the solution $s(k,h)>0$ to \eqref{solutionS} is discussed below based on the $e^{\alpha h}\gamma$.\\
\textbf{Case 1. $e^{\alpha h}\gamma<1$:} From \eqref{solutionS}, $\lim_{k\rightarrow \infty}\chi_3(k,h)=\chi_2(h)$.
Then, there exists a unique time-invariant bound $s^*$ such that $\lim\limits_{k\rightarrow \infty}s(k,h)= s^*\left(h\right)$, where $s^*\left(h\right)$ is given by \eqref{chiii}.\\
\textbf{Case 2. $e^{\alpha h}\gamma \geq 1$:} From \eqref{solutionS}, $\lim_{k\rightarrow \infty}\chi_3(k,h)= \infty$ and, therefore, $\lim\limits_{k\rightarrow \infty}s(k,h)= 0$ and ${\textstyle \frac{e^{\|S\|s(k,h) h}-1}{\|S\|}}\geq 0$.
\end{proof}
If $S=0$ or $S^T=-S$ implying $\|e^{St}\|=1$, we further have the following corollary.
\begin{Corollary}\label{corsin7} Under Assumption \ref{ass0}, $s^i_{l}$ has the following properties for any event-triggered mechanism in the form of \eqref{setrigger} with the triggering function in (\ref{setriggerex}):
\begin{enumerate}
  \item there exists a positive $s(k,h)$ such that
\begin{equation}\label{corsitriggersin}
s^i_{l}> s(k,h),~~~~k\in \mathds{N};
\end{equation}
  \item there exists a positive constant $s^{*}$ such that
 \begin{align}\label{corsitriggersin2}
\lim_{k\rightarrow\infty}s(k,h)=  \left\{\begin{array}{cc}
  s^{*}\left(h\right),~~~~&e^{\alpha h}\gamma<1,\\
0,~~~~&e^{\alpha h}\gamma\geq 1,
 \end{array}\right.
 \end{align}
\end{enumerate}
where $s(k,h)$ and $s^*\left(h\right)$ are given by the following equations in $s(k,h)$ and $s^*$, respectively,
 \begin{align}
 s(k,h) h=& \frac{\sigma_m }{\gamma^{-s(k,h)}\chi_3(k,h)+ \chi_1 e^{\alpha hs(k,h)} },\nonumber\\
s^*(h) h=& \frac{\sigma_m}{\gamma^{-s^{*}(h)}\chi_2(h) +\chi_1  e^{\alpha hs^{*}(h)}}.\nonumber
\end{align}
\end{Corollary}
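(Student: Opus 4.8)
The plan is to specialize the proof of Theorem~\ref{cor5}, exploiting the fact that under the hypothesis $S=0$ or $S^{T}=-S$ the matrix exponential is norm-preserving. First I would record the elementary observation that if $S^{T}=-S$ then $e^{St}$ is orthogonal and hence $\|e^{St}\|=1$ for every $t\geq 0$, while the case $S=0$ gives $e^{St}=I$ with the same conclusion. The consequence I need is that the prefactor appearing in the error estimate \eqref{solutionerrxx3} can be evaluated \emph{exactly} as $\int_{0}^{s^i_{l}h}\|e^{S\tau}\|\,d\tau = s^i_{l}h$, rather than being merely bounded by $\int_{0}^{s^i_{l}h}e^{\|S\|\tau}\,d\tau = \frac{e^{\|S\|s^i_{l}h}-1}{\|S\|}$ as in the general case.

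With this in hand, every step leading to \eqref{solutionerrxx3} carries over verbatim; the only change is that the leading integral factor is $s^i_{l}h$ instead of $\frac{e^{\|S\|s^i_{l}h}-1}{\|S\|}$. Since the triggering rule still enforces the lower bound \eqref{leftinexxsin} without modification, combining it with the sharpened upper estimate yields
\[
s^i_{l}h > \frac{\sigma_m}{\gamma^{-s^i_{l}}\chi_3(k,h)+\chi_1 e^{\alpha h s^i_{l}}},
\]
which is precisely \eqref{solutionofS} after replacing its left-hand side by the limit $\lim_{\|S\|\rightarrow 0}\frac{e^{\|S\|s^i_{l}h}-1}{\|S\|}=s^i_{l}h$. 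This reproduces the defining relation for $s(k,h)$ stated in the corollary.

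For part~1 I would reuse the monotonicity argument: the left-hand side $sh$ is strictly increasing in $s$ and vanishes at $s=0$, while the right-hand side $\frac{\sigma_m}{\gamma^{-s}\chi_3(k,h)+\chi_1 e^{\alpha h s}}$ is strictly decreasing in $s$ (because $\gamma^{-s}$ and $e^{\alpha h s}$ both grow, so the denominator grows) and equals the positive value $\frac{\sigma_m}{\chi_1+\chi_3(k,h)}$ at $s=0$. Hence the two curves cross at a unique positive value $s(k,h)$, giving $s^i_{l}>s(k,h)$ for all $k\in\mathds{N}$. For part~2 the asymptotic analysis is identical to that in Theorem~\ref{cor5}: when $e^{\alpha h}\gamma<1$ one has $\lim_{k\rightarrow\infty}\chi_3(k,h)=\chi_2(h)$, so $s(k,h)$ converges to the unique positive root $s^{*}(h)$ of the limiting equation; when $e^{\alpha h}\gamma\geq 1$ one has $\chi_3(k,h)\rightarrow\infty$, forcing $s(k,h)\rightarrow 0$.

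I do not anticipate a genuine obstacle, as the result is a direct specialization of Theorem~\ref{cor5}. The only point requiring care is recognizing that the replacement is not merely cosmetic: using the exact identity $\|e^{S\tau}\|=1$ rather than the generic bound $\|e^{S\tau}\|\leq e^{\|S\|\tau}$ produces a strictly tighter lower bound on the inter-event steps whenever $S^{T}=-S$ with $\|S\|\neq 0$, which is the practical merit of isolating this norm-preserving case.
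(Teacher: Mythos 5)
Your proof is correct and follows essentially the same route the paper intends: Corollary \ref{corsin7} is obtained from the proof of Theorem \ref{cor5} by replacing the bound $\int_{0}^{s^i_{l}h}\|e^{S\tau}\|\,d\tau\leq\frac{e^{\|S\|s^i_{l}h}-1}{\|S\|}$ with the exact evaluation $s^i_{l}h$ available when $\|e^{St}\|=1$, after which the monotonicity and limiting arguments carry over unchanged. Your observation that this yields a strictly larger (hence sharper) lower bound on the inter-event steps when $S^T=-S$ with $\|S\|\neq 0$ is also accurate.
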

\begin{Remark}Intuitively, $s^i_{l}$ would be smaller if the bounding function $f_i(t_k)$ decays faster than the error dynamics, that is, $\gamma \geq e^{-\alpha h}$, leading to more frequent event triggering. The results in Theorem \ref{cor5} indicate that this may be true as $s(k,h)$ is a decreasing function in $k$ and lower bounded by $0$, which together with $s^i_{l}\in \mathds{Z}^{+}$ implies that $s^i_{l}\geq1$. By contrast, if the bounding function $f_i(t_k)$ decays slower than the error dynamics, that is, $\gamma< e^{-\alpha h}$, there would be less frequent triggering as shown in (\ref{sitriggersin2}) and (\ref{corsitriggersin2}).
{\myr{Theorem \ref{corsin7} can be used to explain the phenomenon observed in \cite{liu2015event} that the inter-sampling times can converge to the step size of simulation in a short time.}}
\end{Remark}
\subsection{Inter-event time}
Define $\tau_l^{i}\triangleq s_{l}^{i}h$ as the inter-event time. Then, by following the same line as used in the proof of Theorem \ref{cor5}, we can establish the following result.
\begin{Theorem}\label{corsin8} Under Assumption \ref{ass0}, $\tau_l^{i}$ has the following properties for any event-triggered mechanism in the form of \eqref{setrigger} with the triggering function in (\ref{setriggerex}):
\begin{enumerate}
  \item (Time-varying bound) there exists a positive $\tau_d(k,h)$ such that
\begin{equation}\label{timessitriggersin}
\tau_l^{i}> \tau_d(k,h),~~~~k\in \mathds{N};
\end{equation}
  \item (Asymptotic bound), there exists a positive constant $\tau_d^*$ such that
 \begin{align}\label{timessitriggersin2}
\lim_{k\rightarrow\infty}\tau_d(k,h)= \left\{\begin{array}{cc}
  \tau_d^*\left(h\right),~~~~&e^{\alpha h}\gamma<1,\\
0,~~~~&e^{\alpha h}\gamma\geq 1,
 \end{array}\right.
 \end{align}
\end{enumerate}
where $\tau_d(k,h)$ and $\tau_d^*\left(h\right)$ are given by the following equations in $\tau_d(k,h)$ and $\tau_d^*(h)$, respectively,
\begin{subequations} \begin{align*}
 \frac{e^{\|S\|\tau_d(k,h)}-1}{\|S\|}  =&\frac{\sigma_m }{\gamma^{-\frac{\tau_d(k,h)}{h}}\chi_3(k,h)+ \chi_1 e^{\alpha \tau_d(k,h)} },\\
\frac{e^{\|S\|\tau_d^*(h)}-1}{\|S\|}  =& \frac{\sigma_m}{\gamma^{-\frac{\tau_d^*(h)}{h}}\chi_2(h) +\chi_1 e^{\alpha \tau_d^*(h)} }
\end{align*}\end{subequations}
\end{Theorem}

\begin{Remark}From the proof of Theorem \ref{radio}, it is noted that $\gamma^2=\alpha_{i,q} \mu^2+\beta_{i,q}\mu+\gamma_{i,q}+1$ for some $i\in \mathcal{V}_F$ and $q\in \mathcal{Q}$, where $\alpha_{i,q}$, $\beta_{i,q}$, and $\gamma_{i,q}$ are defined in (\ref{alphaq3q2q4}). Based on this, it can be verified that $\lim_{h\rightarrow0}\left(\gamma^2-1\right)=0$ and $\lim_{h\rightarrow0}\frac{\gamma^2-1}{h}=2\re(\lambda_q)-2\re(\lambda_i)\mu$ according to \eqref{abc}. Then, we have that
\begin{align*}
\lim_{h\rightarrow0}\left(\gamma^2\right)^{-\frac{1}{2h}}=&\lim_{h\rightarrow0}\left[\left(\gamma^2-1+1\right)^{\frac{1}{\gamma^2-1}}\right]^{-\frac{\gamma^2-1}{2h}}\nonumber\\
=&e^{\re(\lambda_i)\mu-\re(\lambda_q)}.
\end{align*}
Also, $\lim_{h\rightarrow0}\chi_2(h)=\chi_2^{*}$ for some positive $\chi_2^{*}$. As a consequence, $\lim_{h\rightarrow0}\tau_d^{*}(h)=\tau^*$, where $\tau^*$ is the solution to
$$\frac{e^{\|S\|\tau^*}-1}{\|S\|}  = \frac{\sigma_m}{\chi_2^{*}e^{(\re(\lambda_i)\mu-\re(\lambda_q))\tau^*} +\chi_1 e^{\alpha \tau^*} }.$$
Therefore, from this and \eqref{timessitriggersin}, it follows that, when $h$ is sufficiently small, $s_{l}^{i}>\tau^*/h$. This shows that the inter-event steps and the sampling period are in a relation of almost negative proportionality, indicating that an extremely small sampling period may result in less frequent event triggering. Further analysis on the relationship between $h$ and $s^*\left(h\right)$ (or $\tau_d^*\left(h\right)$) and related optimization will be reported elsewhere.
\end{Remark}

\section{Numerical Examples}\label{section6}
{\myr{In the following Examples 1-3, the follower dynamics will be omitted to reveal better the interactions among sampling periods, topologies, reference signals, and related observability. We will establish the analytical characterization of
the relationship among sampling, event triggering, and inter-
event behavior.
}}
\subsection{Example 1: Time-triggered observers with sinusoidal signal}
\begin{figure}[htp]
\begin{center}
\begin{tikzpicture}[transform shape]
    \centering%
    \node (0) [circle, draw=red!20, fill=red!60, very thick, minimum size=7mm] {\textbf{0}};
    \node (1) [circle, right=of 0, draw=blue!20, fill=blue!60, very thick, minimum size=7mm] {\textbf{1}};
    \node (2) [circle, right=of 1, draw=blue!20, fill=blue!60, very thick, minimum size=7mm] {\textbf{2}};
    \node (3) [circle, right=of 2, draw=blue!20, fill=blue!60, very thick, minimum size=7mm] {\textbf{3}};
    \draw[ very  thick,->,  left] (0) edge (1);
    \draw[ very  thick,->,  bend left] (3) edge (2);
    \draw[ very  thick,->, bend left] (1) edge (3);
       \draw[ very  thick,->, bend left] (2) edge (1);
          \draw[ very  thick,->, bend left] (2) edge (3);
\end{tikzpicture}
\end{center}
\caption{Communication topology ${\mathcal{G}}$}
\label{fig1numex}
\end{figure}
 \begin{figure}[ht]
  \centering\setlength{\unitlength}{0.75mm}
  \epsfig{figure=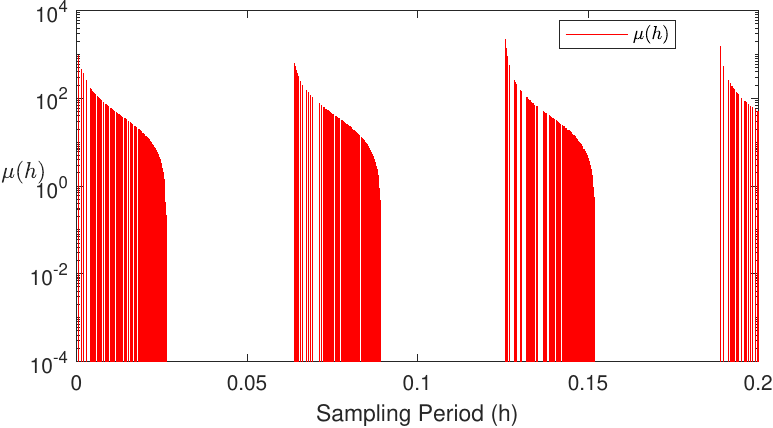,width=\linewidth}
  \caption{The non-pathological sampling period and $\mu$}\label{hmuvalue}
\end{figure}

 \begin{figure}[ht]
  \centering\setlength{\unitlength}{0.75mm}
  \epsfig{figure=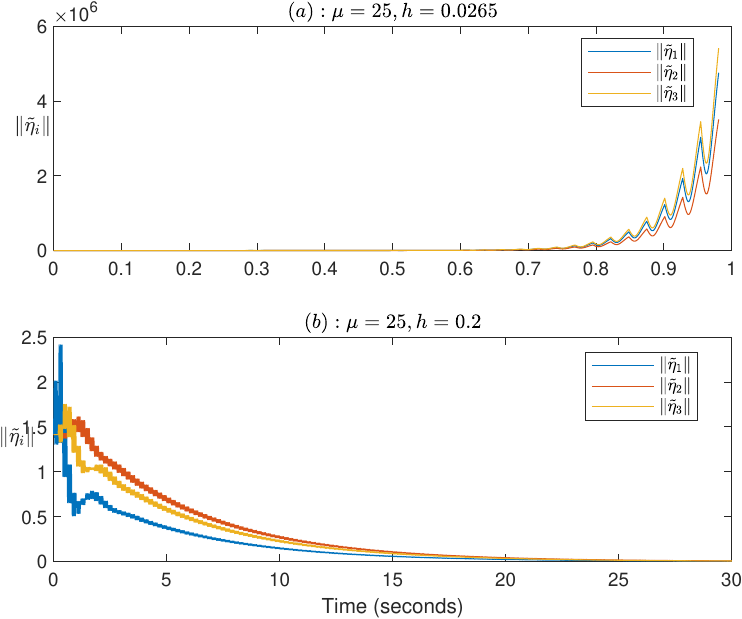,width=\linewidth}
  \caption{Estimation errors with different parameters}\label{figsa}
\end{figure}

Consider distributed system shown in Figure~\ref{fig1numex} with \begin{equation}S=\left[
     \begin{matrix}
       0 & 100 \\
       -100 & 0 \\
     \end{matrix}
   \right]~~~\textnormal{and}~~~H=\left[
                                                                                      \begin{matrix}
                                                                                       2   &  -1 & 0 \\
                                                                                        0  &  1 & -1  \\
                                                                                         -1 &  -1 &  2 \\
                                                                                      \end{matrix}
                                                                            \right].\nonumber\end{equation}

The eigenvalues of $H$ are $\{2.4196 \pm 0.6063j, 0.1067\}$, implying $\theta_i\in \{0,\pm 0.2455\}$. It can be easily calculated that all the non-pathological sampling periods induced $\mu$ shown in Figure~\ref{hmuvalue}.

It can be easily verified that $h=0.0265$ is a pathological sampling period over the graph from \eqref{h4Q3}, and thus $\mu$ does not exist for this sampling period. The simulation in Figure~\ref{figsa}.(a) shows that the estimation errors don't converge to 0 as we analyzed.

$h=0.2$ is a non-pathological sampling period according to Corollary \ref{corral}, and the corresponding feasible interval of $\mu$ is $\bigcap\limits^{}_{q\in \{1,2\}, i\in\{1,2,3\}}(\underline{\mu}_{i,q},\bar{\mu}_{i,q})=(0,50.2333)$. We choose $\mu=25$, and the estimation errors converge to 0 as depicted in Figure~\ref{figsa}. (b). This simple example shows that even a small sampling period may not guarantee convergence due to the existence of pathological sampling periods.

\subsection{Example 2: Time-triggered observers with unbounded signal}
\begin{figure}[htp]
\begin{center}
\begin{tikzpicture}[transform shape]
    \centering%
    \node (0) [circle, draw=red!20, fill=red!60, very thick, minimum size=7mm] {\textbf{0}};
    \node (1) [circle, right=of 0, draw=blue!20, fill=blue!60, very thick, minimum size=7mm] {\textbf{1}};
    \node (2) [circle, right=of 1, draw=blue!20, fill=blue!60, very thick, minimum size=7mm] {\textbf{2}};
    \node (3) [circle, right=of 2, draw=blue!20, fill=blue!60, very thick, minimum size=7mm] {\textbf{3}};
    \draw[ very  thick,->,  left] (0) edge (1);
    \draw[ very  thick,->,  bend left] (3) edge (2);
    \draw[ very  thick,->, bend left] (1) edge (2);
       \draw[ very  thick,->, bend left] (2) edge (1);
          \draw[ very  thick,->, bend left] (2) edge (3);
\end{tikzpicture}
\end{center}
\caption{Communication topology ${\mathcal{G}}$}
\label{fig1numex2spec}
\end{figure}
 \begin{figure}[ht]
  \centering\setlength{\unitlength}{0.75mm}
  \epsfig{figure=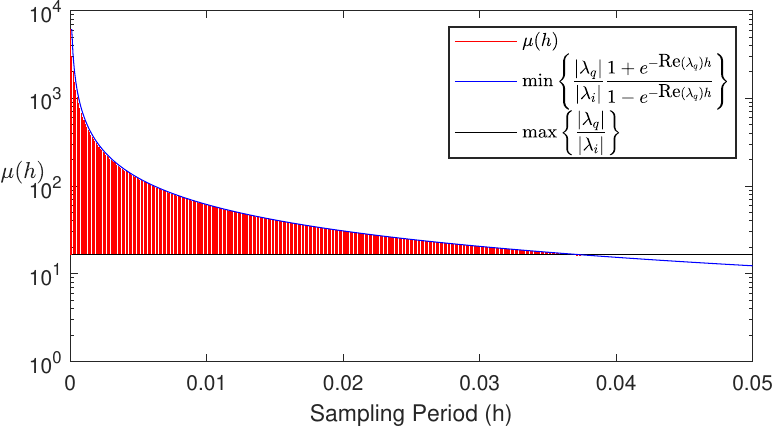,width=\linewidth}
  \caption{The non-pathological sampling period and $\mu$}\label{hmuvalue2spec}
\end{figure}
 \begin{figure}[ht]
  \centering\setlength{\unitlength}{0.75mm}
  \epsfig{figure=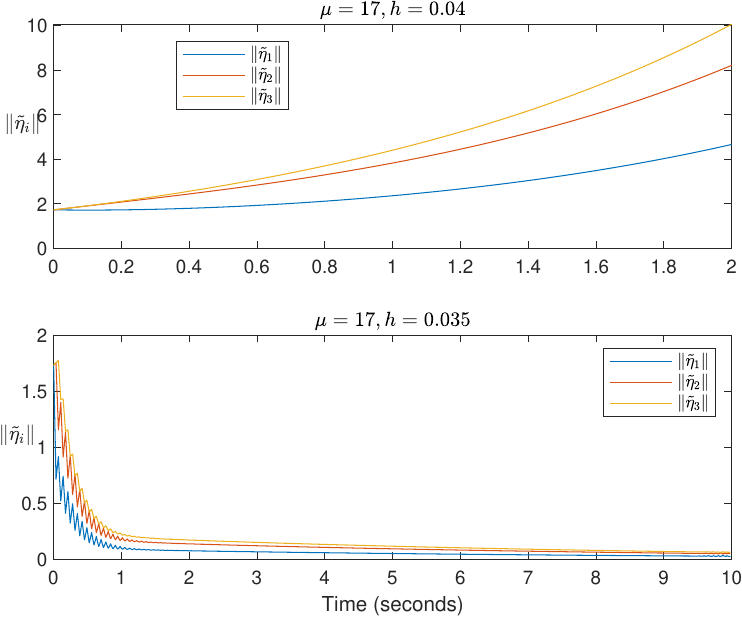,width=\linewidth}
  \caption{Estimation errors with different parameters}\label{ex2figsaspec}
\end{figure}

{\myr We now consider distributed system shown in Figure~\ref{fig1numex2spec} with \begin{equation}S=\left[
                                                                                      \begin{matrix}
                                                                                       2   &  -1 & 0 \\
                                                                                       -1  &  2 & -1  \\
                                                                                         0 &  -1 &  1 \\
                                                                                      \end{matrix}
                                                                            \right]~~~\textnormal{and}~~~H=\left[
                                                                                      \begin{matrix}
                                                                                       2   &  -1 & 0 \\
                                                                                       -1  &  2 & -1  \\
                                                                                         0 &  -1 &  1 \\
                                                                                      \end{matrix}
                                                                            \right].\nonumber\end{equation}
The eigenvalues of $H$ and $S$ are $\{0.1981,  1.555, 3.247\}$, implying $\theta_i\equiv\theta_q\equiv0$. It can be easily calculated that all the non-pathological sampling periods induced $\mu$ shown in Figure~\ref{hmuvalue2spec} according to Remark \ref{Speremark3}.

It can be easily verified that $h=0.04$ is a pathological sampling period over the graph, and $\bigcap\limits^{}_{q\in \{1,2\}, i\in\{1,2,3\}}(\underline{\mu}_{i,q},\bar{\mu}_{i,q})=\emptyset$. Hence $\mu$ does not exist for this sampling period. The simulation in Figure~\ref{ex2figsaspec}.(a) shows that the estimation errors don't converge to 0 as we analyzed.

$h=0.035$ is a non-pathological sampling period according to Theorem \ref{radio}, and the corresponding feasible interval of $\mu$ is $\bigcap\limits^{}_{q\in \{1,2\}, i\in\{1,2,3\}}(\underline{\mu}_{i,q},\bar{\mu}_{i,q})=(16.3937, 17.5988)$. We choose $\mu=17$, and the estimation errors converge to 0 as depicted in Figure~\ref{ex2figsaspec}. (b). This simple example shows that even a small sampling period may not guarantee convergence due to the existence of pathological sampling periods.}

\subsection{Example 3: Mixed time- and event-triggered observers}
\begin{figure}[htbp]
\begin{center}
\begin{tikzpicture}[transform shape]
    \centering%
    \node (0) [circle,draw=red!20, fill=red!60, very thick, minimum size=7mm]{\textbf{0}};
    \node (1) [circle, right=of 0, draw=blue!20, fill=blue!60, very thick, minimum size=7mm] {\textbf{1}};
    \node (2) [circle, right =of 1, draw=blue!20, fill=blue!60, very thick, minimum size=7mm] {\textbf{2}};
    \node (3) [circle, right=of 2, draw=blue!20, fill=blue!60, very thick, minimum size=7mm] {\textbf{3}};
    \node (4) [circle, right=of 3, draw=blue!20, fill=blue!60, very thick, minimum size=7mm] {\textbf{4}};
    \draw [ very  thick,->, bend right] (0) edge (2);
    \draw [ very  thick,->, bend left] (0) edge (1);
     \draw [ very  thick,->, bend right] (4) edge (1);
     \draw [ very  thick,->, bend right] (1) edge (2);
      \draw [ very  thick,->, bend right] (4) edge (3);
      \draw [ very  thick,->, bend right] (3) edge (4);
      \draw [ very  thick,->, bend right] (3) edge (2);
      \draw [ very  thick,->, bend right] (2) edge (3);
        \draw [ very  thick,->, bend right] (2) edge (4);
    \end{tikzpicture}
\end{center}
\caption{ Communication topology ${\mathcal{G}}$}
\label{fig1}
\end{figure}
In this example, we consider a linear distributed system composed of one leader and four followers in the form of \eqref{leader} with
$S=\left[
           \begin{smallmatrix}
               0 & 1 \\
              -1 & 0 \\
             \end{smallmatrix}
           \right]$. It can be easily calculated that $\|e^{St}\|=1$.
 $v(t)\in \mathds{R}^{2}$ can be measured by followers $1$ and $2$ as shown in Figure~\ref{fig1}. The matrix $H$ for this example can be determined as
\begin{align}H=\left[
       \begin{matrix}
         2 & 0 & 0 & -1 \\
         -1 & 3 & -1 & 0 \\
         0 & -1 & 2 & -1 \\
         0 &-1 & -1& 2 \\
       \end{matrix}
     \right].\nonumber
\end{align}

\begin{figure}[ht]
  \centering\setlength{\unitlength}{0.75mm}
 \includegraphics[width=\linewidth]{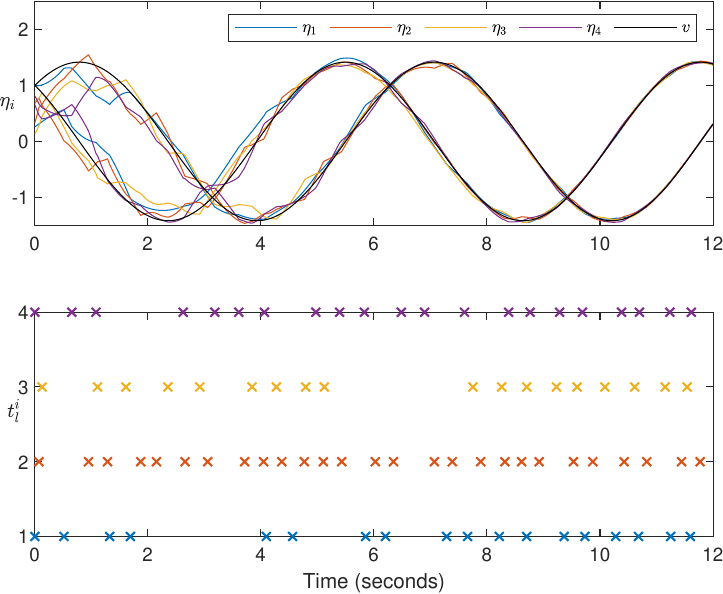}
  \caption{Estimation performance of all followers.}\label{figcase3ese}
\end{figure}

\begin{figure}[ht]
  \centering\setlength{\unitlength}{0.75mm}
  \epsfig{figure=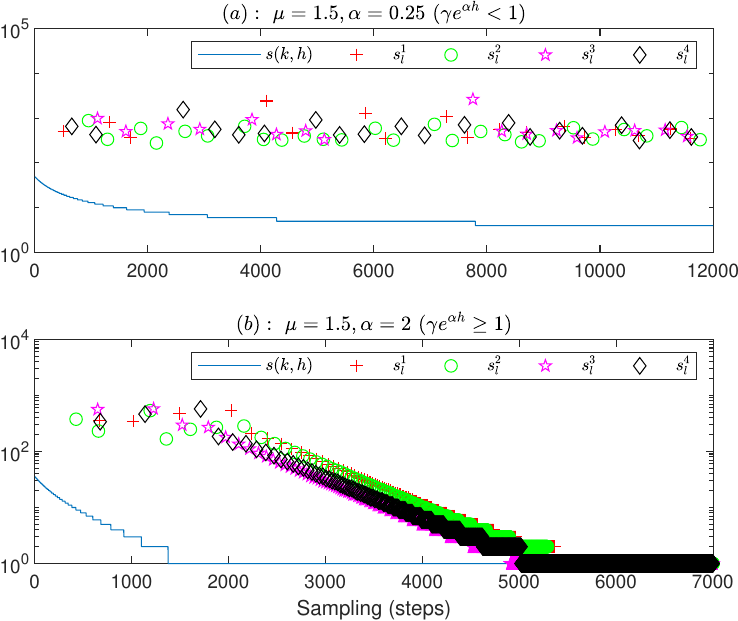,width=3.4in}
  \caption{Inter-event steps.}\label{figcase1tre}
\end{figure}
 We design a mixed time- and event-triggered distributed observers composed of \eqref{compensator} with the event-triggered condition in \eqref{setrigger} where $f_i(t_k)=\sigma_i e^{-\alpha_i t_k}$. Simulations are conducted with the following parameters: $\mu=1.5$, $\sigma_i=1$ and $h=0.001$. It can be verified that $\gamma=0.9994$ and $\beta=1.0004$. Figure~\ref{figcase3ese} shows the estimation performance of all followers with $\alpha =0.25$. The inter-event steps $s_{l}^{i}$ and $s(k,h)$ for the case $\gamma e^{\alpha h}<1$ with $\alpha =0.25$ and for the case $\gamma e^{\alpha h}\geq1$ with $\alpha =2$ are shown in Figure~\ref{figcase1tre}.(a) and Figure~\ref{figcase1tre}.(b), respectively. It can be seen from the figure that slow triggering functions ($\gamma e^{\alpha h}<1$) will probably increase the inter-event steps $s^i_{l}$, and fast triggering function ($\gamma e^{\alpha h}\geq1$) may lead to the degradation of the event-triggered system to a pure sampled-data system.

\section{Conclusions}\label{conlu}
In this paper, the distributed state estimation problem for linear distributed systems has been investigated using mixed time- and event-triggered observers.
A distributed observer with time-triggered observations has been proposed to reconstruct the leader's state, and an auxiliary observer with event-triggered communication has been designed to reduce the information exchange among followers. Event triggering is based on locally sampled information and local computation, and Zeno behavior is naturally excluded.
All feasible sampling periods for the observability of the connected systems have been given, revealing some fundamental relationships among sampling periods, topologies, and system dynamics in distributed systems.
Time-varying and asymptotic bounds for the inter-event steps and time have been given, and their relationship with periodic sampling and event triggering has been shown.
%
Three numerical examples have been provided to show the effectiveness of the proposed control approach.

\section*{Appendix A}\label{appendixa}
\textbf{(1)}  Construct an auxiliary function $$f(x,y)=e^{2xh}-2e^{xh}\cos(yh)+1,$$
where $x,y\in\mathds{R}$. Taking Taylor series expansion around the origin yields that
\begin{align}
f(x,y)=&{\textstyle f(0,0)+x\frac{\partial f}{\partial x}(0,0)+y\frac{\partial f}{\partial y}(0,0)+\frac{1}{2}\big[x^2\frac{\partial^2 f}{\partial^2 x}(0,0)}\nonumber\\
&{\textstyle +2xy\frac{\partial^2 f}{\partial x \partial y}(0,0)+y^2\frac{\partial^2 f}{\partial^2 y}(0,0)\big]+o(\rho^2)}\nonumber\\
=& \rho^2h^2+o(\rho^2),\nonumber
\end{align}
where $\rho=\sqrt{x^2+y^2}$, and $o(\rho^2)$ is the Peano remainder such that $\lim_{\rho\rightarrow0} \frac{o(\rho^2)}{\rho^2}=0$. Noting that $V_{q}^2(h)+U_{q}^2(h)=f(\operatorname{Re}(\lambda_q),\operatorname{Im}(\lambda_q))$, we have that
\begin{align}
\frac{V_{q}^2(h)+U_{q}^2(h)}{|\lambda_q|^2}=&\frac{h^2|\lambda_q|^2+o(|\lambda_q|^2)}{|\lambda_q|^2}\nonumber\\
=&h^2 +\frac{o(|\lambda_q|^2)}{|\lambda_q|^2}.\nonumber
\end{align}
Hence, \begin{align}\label{uvh2}\lim_{\lambda_q\rightarrow0} \frac{V_{q}^2(h)+U_{q}^2(h)}{|\lambda_q|^2}=h^2. \tag{A.1}\end{align}

\textbf{(2)}
Construct two auxiliary functions
\begin{align}
g(x,y)=&\sin(yh)x- e^{xh}y+\cos(yh)y,\nonumber\\
p(x,y)=&\sin(yh)y+ e^{xh}x-\cos(yh)x,\nonumber
\end{align}
where $x,y\in\mathds{R}$. Taking Taylor series expansion around the origin yields that
\begin{align}
g(x,y)=&{\textstyle g(0,0)+x\frac{\partial g}{\partial x}(0,0)+y\frac{\partial g}{\partial y}(0,0)+\frac{1}{2}\big[x^2\frac{\partial^2 g}{\partial^2 x}(0,0)}\nonumber\\
&{\textstyle+2xy\frac{\partial^2 g}{\partial x \partial y}(0,0)+y^2\frac{\partial^2 g}{\partial^2 x}(0,0)\big]+o(\rho^2)}\nonumber\\
=& o(\rho^2),\nonumber\\
p(x,y)=&{\textstyle p(0,0)+x\frac{\partial p}{\partial x}(0,0)+y\frac{\partial p}{\partial y}(0,0)+\frac{1}{2}\big[x^2\frac{\partial^2 p}{\partial^2 x}(0,0)}\nonumber\\
&{\textstyle +2xy\frac{\partial^2 p}{\partial x\partial y}(0,0)+y^2\frac{\partial^2 p}{\partial^2 y}(0,0)\big]+o(\rho^2)}\nonumber\\
=& h\rho^2+o(\rho^2),\nonumber
\end{align}
where $\rho=\sqrt{x^2+y^2}$, and $o(\rho^2)$ is the Peano remainder such that $\lim_{\rho\rightarrow0} \frac{o(\rho^2)}{\rho^2}=0$. Since \begin{align}V_{q}(h)\re(\lambda_q)-U_{q}(h)\im(\lambda_q)=&g\left(\re(\lambda_q),\im(\lambda_q)\right),\nonumber\\ U_{q}(h)\re(\lambda_q)+V_{q}(h)\im(\lambda_q)=&p\left(\re(\lambda_q),\im(\lambda_q)\right),\nonumber
\end{align} we have that
\begin{align}
\frac{V_{q}(h)\re(\lambda_q)-U_{q}(h)\im(\lambda_q)}{|\lambda_q|^2}=&\frac{o(|\lambda_q|^2)}{|\lambda_q|^2},\nonumber\\
\frac{U_{q}(h)\re(\lambda_q)+V_{q}(h)\im(\lambda_q)}{|\lambda_q|^2}=&h+\frac{o(|\lambda_q|^2)}{|\lambda_q|^2}.\nonumber
\end{align}
Hence,
\begin{align}\label{vuoh}
\lim\limits_{\lambda_q\rightarrow0}\frac{V_{q}(h)\re(\lambda_q)-U_{q}(h)\im(\lambda_q)}{|\lambda_q|^2}=0,\nonumber\\
\lim\limits_{\lambda_q\rightarrow0}\frac{U_{q}(h)\re(\lambda_q)+V_{q}(h)\im(\lambda_q)}{|\lambda_q|^2}=h.\tag{A.2}
\end{align}
%
Next, we will prove that
\begin{description}
  \item[\textbf{(a)}] $(\phi_{q}(h)-\theta_q) \in (-\pi, \pi)$ when $\lambda_q\rightarrow0$;
  \item[\textbf{(b)}] $\lim\limits_{\lambda_q\rightarrow0}\sin(\phi_{q}(h)-\theta_q)=0$;
  \item[\textbf{(c)}] $\lim\limits_{\lambda_q\rightarrow0}\cos(\phi_{q}(h)-\theta_q)=1$.
\end{description}

\textbf{(a)} We discuss the following three cases based on $\lambda_q$:

\textbf{Case i. $\re(\lambda_q)=0$ and $\im(\lambda_q)\neq0$:} On one hand, as $\re(\lambda_q)=0$, $\lim\limits_{\im(\lambda_q)\rightarrow0^{+}}\theta_q=\frac{\pi}{2}$ or $\lim\limits_{\im(\lambda_q)\rightarrow0^{-}}\theta_q=-\frac{\pi}{2}$. On the other hand, when $\re(\lambda_q)=0$, $\phi_{q}(h)=\operatorname{Arg}\big((1-\cos(\im(\lambda_q)h)+\sin(\im(\lambda_q)h))j\big)$, which implies that $$\lim\limits_{\im(\lambda_q)\rightarrow0^{+}}\phi_{q}(h)=\frac{\pi}{2}~~~~\textnormal{or}~~~~\lim\limits_{\im(\lambda_q)\rightarrow0^{-}}\phi_{q}(h)=-\frac{\pi}{2}.$$

\textbf{Case ii. $\re(\lambda_q)\neq0$ and $\im(\lambda_q)=0$:} Since $\im(\lambda_q)=0$ and $\re(\lambda_q)\neq0$,
$\theta_{q}=0$ and $\phi_{q}(h)=\operatorname{Arg}\big((e^{\re(\lambda_q)h}-1)+0j\big)=0$.

\textbf{Case iii. $\re(\lambda_q)\neq$ and $\im(\lambda_q)\neq0$:} when $\im(\lambda_q)\rightarrow0$, $\sin(\im(\lambda_q)h)$ and $\im(\lambda_q)$ have the same sign. Hence, $\theta_{q} =\operatorname{Arg}(\re(\lambda_q)+\im(\lambda_q)j)$ and $\phi_{q}(h)=\operatorname{Arg}\big((e^{\re(\lambda_q)h}-\cos(\im(\lambda_q)h))+\sin(\im(\lambda_q)h)j\big)$ are either in the first/second quadrant or in the third/fourth quadrant at the same time.

Combining \textbf{Cases} i to iii gives that $(\phi_{q}(h)-\theta_q) \in (-\pi, \pi)$ when $\lambda_q\rightarrow0$.

\textbf{(b)} Applying the angle difference identity of the sine function yields that
\begin{align}
\sin(\phi_{q}&(h)-\theta_q)=\sin(\phi_{q}(h))\cos(\theta_q)-\cos(\phi_{q}(h))\sin(\theta_q)\nonumber\\
&=\frac{V_{q}(h)\re(\lambda_q)}{|\lambda_q|\sqrt{V_{q}^2(h)+U_{q}^2(h)}}-\frac{U_{q}(h)\im(\lambda_q)}{|\lambda_q|\sqrt{V_{q}^2(h)+U_{q}^2(h)}}\nonumber\\
&=\frac{V_{q}(h)\re(\lambda_q)-U_{q}(h)\im(\lambda_q)}{|\lambda_q|^2}\frac{|\lambda_q|}{\sqrt{V_{q}^2(h)+U_{q}^2(h)}}.\nonumber
\end{align}
From \eqref{uvh2} and \eqref{vuoh}, it follows that $\lim\limits_{\lambda_q\rightarrow0}\sin(\phi_{q}(h)-\theta_q)=0$. 

\textbf{(c)} Applying the angle difference identity of the cosine function yields that
\begin{align}
\cos(\phi_{q}&(h)-\theta_q)=\cos(\phi_{q}(h))\cos(\theta_q)+\sin(\phi_{q}(h))\sin(\theta_q)\nonumber\\
&=\frac{U_{q}(h)\re(\lambda_q)}{|\lambda_q|\sqrt{V_{q}^2(h)+U_{q}^2(h)}}+\frac{V_{q}(h)\im(\lambda_q)}{|\lambda_q|\sqrt{V_{q}^2(h)+U_{q}^2(h)}}\nonumber\\
&=\frac{U_{q}(h)\re(\lambda_q)+V_{q}(h)\im(\lambda_q)}{|\lambda_q|^2}\frac{|\lambda_q|}{\sqrt{V_{q}^2(h)+U_{q}^2(h)}}.\nonumber
\end{align}
From \eqref{uvh2} and \eqref{vuoh}, it follows that $\lim\limits_{\lambda_q\rightarrow0}\cos(\phi_{q}(h)-\theta_q)=1$.

By (a) and the continuity of the sine and cosine functions, there is only one limit point of $\phi_{q}(h)-\theta_q$, that is, 0, in the closure of $(-\pi, \pi)$ such that (b) and (c) hold. Therefore, $\lim\limits_{\lambda_q\rightarrow0}(\phi_{q}(h)-\theta_q)=0$.

\textbf{(3)} 
It is easy to verify that
\begin{align*}
\lim_{h\rightarrow0}U_{q}\left(h\right) & =0,\\
\lim_{h\rightarrow0}V_{q}\left(h\right) & =0,\\
\dot{U}_{q}\left(h\right) & =\re(\lambda_{q})e^{\re(\lambda_{q})h}+\im(\lambda_{q})\sin\left(\im(\lambda_{q})h\right),\\
\dot{V}_{q}\left(h\right) & =\im(\lambda_{q})\cos\left(\im(\lambda_{q})h\right).
\end{align*}
Then, by using the L'H\^{o}pital's rule and these equalities, we have that
\begin{alignat*}{1}
 & \lim_{h\rightarrow0}\frac{U_{q}^{2}\left(h\right)+V_{q}^{2}\left(h\right)}{h^{2}}\\
= & \lim_{h\rightarrow0}\frac{U_{q}\left(h\right)\dot{U}_{q}\left(h\right)+V_{q}\left(h\right)\dot{V}_{q}\left(h\right)}{h}\\
= & \lim_{h\rightarrow0}\left[U_{q}\left(h\right)\ddot{U}\left(h\right)+\dot{U}_{q}^{2}\left(h\right)+V_{q}\left(h\right)\ddot{V}\left(h\right)+\dot{V}_{q}^{2}\left(h\right)\right]\\
= & \lim_{h\rightarrow0}\left[\dot{U}_{q}^{2}\left(h\right)+\dot{V}_{q}^{2}\left(h\right)\right]
=  \left|\lambda_{q}\right|^{2}
\end{alignat*}

\textbf{(4)}
By following the same line as used in the proof of (a), we have that $(\phi_{q}(h)-\theta_q) \in (-\pi, \pi)$ when $h\rightarrow0$. Then, using the L'H\^{o}pital's rule gives that
\begin{align}\lim_{h \rightarrow0}\frac{V_{q}(h)}{U_{q}(h)}=&\lim_{h \rightarrow0}\frac{\sin(\im(\lambda_q)h)}{e^{\re(\lambda_q)h}-\cos(\im(\lambda_q)h)}\nonumber\\
=&\lim_{h \rightarrow0}\frac{\im(\lambda_q)\cos(\im(\lambda_q)h)}{\re(\lambda_q)e^{\re(\lambda_q)h}+\im(\lambda_q)\sin(\im(\lambda_q)h)}\nonumber\\
=&\frac{\im(\lambda_q)}{\re(\lambda_q)}=\tan(\theta_q),\nonumber
\end{align}
which together with $(\phi_{q}(h)-\theta_q) \in (-\pi, \pi)$ implies that $$\lim\limits_{h\rightarrow0}(\phi_{q}(h)-\theta_q)=0.$$

\section*{Appendix B}\label{appendixaa}
{\myr \begin{strip}\begin{proof}
Denote $J_S=PSP^{-1}$ and $J_H=QHQ^{-1}$, where $J_H$ and $J_S$ are the Jordan form of $H$ and $S$, respectively. Then, we have
\begin{align}
(P\otimes Q)F\left(\mu\right)(P\otimes Q)^{-1} &=(P\otimes Q)\big[I_N\otimes e^{Sh}-\mu H\otimes \int_{0}^{h}e^{S\tau}d\tau\big](P\otimes Q)\nonumber\\
&=(P\otimes Q)\big[I_N\otimes e^{Sh}-\mu H\otimes \int_{0}^{h}e^{S\tau}d\tau\big](P^{-1}\otimes Q^{-1})\nonumber\\
&=(P\otimes Q)(I_N\otimes e^{Sh})(P^{-1}\otimes Q^{-1})-\mu (P\otimes Q)\big[H\otimes \int_{0}^{h}e^{S\tau}d\tau\big](P^{-1}\otimes Q^{-1})\nonumber\\
&=\big[PP^{-1}\otimes Qe^{Sh}Q^{-1}-\mu PHP^{-1}\otimes \int_{0}^{h}Qe^{S\tau}Q^{-1}d\tau\big]\nonumber\\
&=I_N\otimes e^{J_Sh}-\mu J_H\otimes \int_{0}^{h}e^{J_S\tau}d\tau. \label{matrxdecomp}
\end{align}
The matrix \eqref{matrxdecomp} is the upper triangular matrix, provided that $J_H$ and $J_S$ are the Jordan form. Thus all the eigenvalues of $F\left(\mu\right)$ are in the main
diagonal, which completes the proof.
\end{proof}\end{strip}}

\section*{Appendix C}\label{appendixb}
$\alpha_{i,q}$, $\beta_{i,q}$, and $\gamma_{i,q}$ are evaluated as follows:
\begin{align}
\alpha_{i,q}
=&\frac{|\lambda_i|^2}{|\lambda_q|^2}\big|e^{\lambda_q h}-1\big|^2\nonumber\\
=&\frac{|\lambda_i|^2}{|\lambda_q|^2}\left(e^{2\re(\lambda_q)h}-2e^{\re(\lambda_q)h}\cos(\im(\lambda_q)h)+1\right)\nonumber\\
=&\frac{|\lambda_i|^2}{|\lambda_q|^2}\big(V_{q}^2(h)+U_{q}^2(h)\big),\nonumber\\
\beta_{i,q}=&-\int_{0}^{h}\big[\lambda_i e^{\lambda_q\tau+\lambda_q^*h}+\lambda_i^* e^{\lambda_q^*\tau+\lambda_qh}\big]d\tau\nonumber\\
=&|\lambda_i|\int_{0}^{h}\big[e^{\theta_i j}e^{\lambda_q\tau+\lambda_q^*h}+e^{-\theta_i j} e^{\lambda_q^*\tau+\lambda_qh}\big]d\tau\nonumber\\
=&-\frac{|\lambda_i|}{|\lambda_q|^2}\big[\lambda_q^{*}e^{\lambda_qh+\lambda_q^*h+\theta_i j}+ \lambda_qe^{\lambda_q^*h+\lambda_qh-\theta_i j}\nonumber\\
&-\lambda_q^{*}e^{\lambda_q^*h+\theta_i j}- \lambda_qe^{\lambda_qh-\theta_i j}\big]\nonumber\\
=&-\frac{|\lambda_i|}{|\lambda_q|}\big[e^{\lambda_qh+\lambda_q^*h+\theta_i j-\theta_q j}-e^{\lambda_q^*h+\theta_i j-\theta_q j}\nonumber\\
&+ e^{\lambda_q^*h+\lambda_qh-\theta_i j+\theta_q j}- e^{\lambda_qh-\theta_i j+\theta_q j}\big]\nonumber\\
=&-2\frac{|\lambda_i|}{|\lambda_q|}e^{\re(\lambda_q)h}\big[\cos(\theta_i-\theta_q)e^{\re(\lambda_q)h}\nonumber\\
&-\cos(\im(\lambda_q)h-\theta_i+\theta_q)\big]\nonumber\\
=&-2\frac{|\lambda_i|}{|\lambda_q|}e^{\re(\lambda_q)h}\big[\cos(\theta_i-\theta_q)U_{q}(h)\nonumber\\
&-V_{q}(h)\sin(\theta_i-\theta_q)\big]\nonumber\\
=&-2\frac{|\lambda_i|}{|\lambda_q|}e^{\re(\lambda_q)h}\sqrt{V_{q}^2(h)+U_{q}^2(h)}\cos(\psi_{i,q}(h)),\nonumber\\
\gamma_{i,q}=&e^{2\re(\lambda_q)h}-1,\nonumber
\end{align}
where $\psi_{i,q}(h)$, $U_{q}(h)$, and $V_{q}(h)$ are defined in \eqref{UVPT}.
Then, we have that
\begin{align}
\beta_{i,q}^2-&4\alpha_{i,q}\gamma_{i,q}\nonumber\\
=&4\frac{|\lambda_i|^2}{|\lambda_q|^2}e^{2\re(\lambda_q)h}(V_{q}^2(h)+U_{q}^2(h))\cos^2(\psi_{i,q}(h))\nonumber\\
&-4\frac{|\lambda_i|^2}{|\lambda_q|^2}\big(V_{q}^2(h)+U_{q}^2(h)\big)\left(e^{2\re(\lambda_q)h}-1\right)\nonumber\\
=&4\frac{|\lambda_i|^2}{|\lambda_q|^2}(V_{q}^2(h)+U_{q}^2(h))\big[1-e^{2\re(\lambda_q)h}\sin^2(\psi_{i,q}(h))\big].\nonumber
\end{align}

\bibliographystyle{IEEEtran}
\bibliography{myref}

\begin{thebibliography}{10}
\providecommand{\url}[1]{#1}
\csname url@samestyle\endcsname
\providecommand{\newblock}{\relax}
\providecommand{\bibinfo}[2]{#2}
\providecommand{\BIBentrySTDinterwordspacing}{\spaceskip=0pt\relax}
\providecommand{\BIBentryALTinterwordstretchfactor}{4}
\providecommand{\BIBentryALTinterwordspacing}{\spaceskip=\fontdimen2\font plus
\BIBentryALTinterwordstretchfactor\fontdimen3\font minus
  \fontdimen4\font\relax}
\providecommand{\BIBforeignlanguage}[2]{{%
\expandafter\ifx\csname l@#1\endcsname\relax
\typeout{** WARNING: IEEEtran.bst: No hyphenation pattern has been}%
\typeout{** loaded for the language `#1'. Using the pattern for}%
\typeout{** the default language instead.}%
\else
\language=\csname l@#1\endcsname
\fi
#2}}
\providecommand{\BIBdecl}{\relax}
\BIBdecl

\bibitem{proskurnikov2016synchronization}
A.~V. Proskurnikov and M.~Cao, ``Synchronization of pulse-coupled oscillators
  and clocks under minimal connectivity assumptions,'' \emph{IEEE Transactions
  on Automatic Control}, vol.~62, no.~11, pp. 5873--5879, 2016.

\bibitem{poveda2019hybrid}
J.~I. Poveda and A.~R. Teel, ``Hybrid mechanisms for robust synchronization and
  coordination of multi-agent networked sampled-data systems,''
  \emph{Automatica}, vol.~99, pp. 41--53, 2019.

\bibitem{ren2005consensus}
W.~Ren and R.~W. Beard, ``Consensus seeking in multiagent systems under
  dynamically changing interaction topologies,'' \emph{IEEE Transactions on
  Automatic Control}, vol.~50, no.~5, pp. 655--661, 2005.

\bibitem{priscoli2015leader}
F.~D. Priscoli, A.~Isidori, L.~Marconi, and A.~Pietrabissa, ``Leader-following
  coordination of nonlinear agents under time-varying communication
  topologies,'' \emph{IEEE Transactions on Control of Network Systems}, vol.~2,
  no.~4, pp. 393--405, 2015.

\bibitem{mou2015target}
S.~Mou, M.~Cao, and A.~S. Morse, ``Target-point formation control,''
  \emph{Automatica}, vol.~61, pp. 113--118, 2015.

\bibitem{vidal2003formation}
R.~Vidal, O.~Shakernia, and S.~Sastry, ``Formation control of nonholonomic
  mobile robots with omnidirectional visual servoing and motion segmentation,''
  in \emph{2003 IEEE International Conference on Robotics and Automation},
  vol.~1.\hskip 1em plus 0.5em minus 0.4em\relax IEEE, 2003, pp. 584--589.

\bibitem{wu2021design}
Y.~Wu, A.~Isidori, and R.~Lu, ``On the design of distributed observers for
  nonlinear systems,'' \emph{IEEE Transactions on Automatic Control}, vol.~67,
  no.~7, pp. 3229--3242, 2022.

\bibitem{olfati2007distributed}
R.~Olfati-Saber, ``Distributed kalman filtering for sensor networks,'' in
  \emph{2007 46th IEEE Conference on Decision and Control}.\hskip 1em plus
  0.5em minus 0.4em\relax IEEE, 2007, pp. 5492--5498.

\bibitem{su2011cooperative}
Y.~Su and J.~Huang, ``Cooperative output regulation of linear multi-agent
  systems,'' \emph{IEEE Transactions on Automatic Control}, vol.~57, no.~4, pp.
  1062--1066, 2011.

\bibitem{wang2017distributed}
L.~Wang and A.~S. Morse, ``A distributed observer for a time-invariant linear
  system,'' \emph{IEEE Transactions on Automatic Control}, vol.~63, no.~7, pp.
  2123--2130, 2017.

\bibitem{mitra2018distributed}
A.~Mitra and S.~Sundaram, ``Distributed observers for lti systems,'' \emph{IEEE
  Transactions on Automatic Control}, vol.~63, no.~11, pp. 3689--3704, 2018.

\bibitem{han2018simple}
W.~Han, H.~L. Trentelman, Z.~Wang, and Y.~Shen, ``A simple approach to
  distributed observer design for linear systems,'' \emph{IEEE Transactions on
  Automatic Control}, vol.~64, no.~1, pp. 329--336, 2018.

\bibitem{kim2019completely}
T.~Kim, C.~Lee, and H.~Shim, ``Completely decentralized design of distributed
  observer for linear systems,'' \emph{IEEE Transactions on Automatic Control},
  vol.~65, no.~11, pp. 4664--4678, 2019.

\bibitem{liu2019cooperative}
T.~Liu and J.~Huang, ``Cooperative robust output regulation for a class of
  nonlinear multi-agent systems subject to a nonlinear leader system,''
  \emph{Automatica}, vol. 108, p. 108501, 2019.

\bibitem{chen2012optimal}
T.~Chen and B.~A. Francis, \emph{Optimal sampled-data control systems}.\hskip
  1em plus 0.5em minus 0.4em\relax London U.K.: Springer Science \& Business
  Media, 1995.

\bibitem{chen1991h}
------, ``${H}_2$-optimal sampled-data control,'' \emph{IEEE Transactions on
  Automatic Control}, vol.~36, no.~4, pp. 387--397, 1991.

\bibitem{astrom2002comparison}
K.~J. Astrom and B.~M. Bernhardsson, ``Comparison of {R}iemann and {L}ebesgue
  sampling for first order stochastic systems,'' in \emph{Proceedings of the
  41st IEEE Conference on Decision and Control, 2002.}, vol.~2.\hskip 1em plus
  0.5em minus 0.4em\relax IEEE, 2002, pp. 2011--2016.

\bibitem{meng2012optimal}
X.~Meng and T.~Chen, ``Optimal sampling and performance comparison of periodic
  and event based impulse control,'' \emph{IEEE Transactions on Automatic
  Control}, vol.~57, no.~12, pp. 3252--3259, 2012.

\bibitem{borgers2014event}
D.~P. Borgers and W.~M.~H. Heemels, ``Event-separation properties of
  event-triggered control systems,'' \emph{IEEE Transactions on Automatic
  Control}, vol.~59, no.~10, pp. 2644--2656, 2014.

\bibitem{yu2020zeno}
H.~Yu and T.~Chen, ``On zeno behavior in event-triggered finite-time consensus
  of multi-agent systems,'' \emph{IEEE Transactions on Automatic Control},
  vol.~66, no.~10, pp. 4700 -- 4714, 2020.

\bibitem{liu2017event}
W.~Liu and J.~Huang, ``Event-triggered cooperative robust practical output
  regulation for a class of linear multi-agent systems,'' \emph{Automatica},
  vol.~85, pp. 158--164, 2017.

\bibitem{liu2018cooperative}
------, ``Cooperative global robust output regulation for a class of nonlinear
  multi-agent systems by distributed event-triggered control,''
  \emph{Automatica}, vol.~93, pp. 138--148, 2018.

\bibitem{hu2017cooperative}
W.~Hu, L.~Liu, and G.~Feng, ``Cooperative output regulation of linear
  multi-agent systems by intermittent communication: {A} unified framework of
  time- and event-triggering strategies,'' \emph{IEEE Transactions on Automatic
  Control}, vol.~63, no.~2, pp. 548--555, 2017.

\bibitem{hu2018event}
------, ``Event-triggered cooperative output regulation of linear multi-agent
  systems under jointly connected topologies,'' \emph{IEEE Transactions on
  Automatic Control}, vol.~64, no.~3, pp. 1317--1322, 2018.

\bibitem{deng2021distributed}
C.~Deng, C.~Wen, J.~Huang, X.-M. Zhang, and Y.~Zou, ``Distributed
  observer-based cooperative control approach for uncertain nonlinear mass
  under event-triggered communication,'' \emph{IEEE Transactions on Automatic
  Control}, vol.~67, no.~5, pp. 2669--2676, 2021.

\bibitem{meng2013event}
X.~Meng and T.~Chen, ``Event based agreement protocols for multi-agent
  networks,'' \emph{Automatica}, vol.~49, no.~7, pp. 2125--2132, 2013.

\bibitem{liu2017distributed}
T.~Liu, M.~Cao, C.~De~Persis, and J.~M. Hendrickx, ``Distributed
  event-triggered control for asymptotic synchronization of dynamical
  networks,'' \emph{Automatica}, vol.~86, pp. 199--204, 2017.

\bibitem{cheng2017event}
T.-H. Cheng, Z.~Kan, J.~R. Klotz, J.~M. Shea, and W.~E. Dixon,
  ``Event-triggered control of multiagent systems for fixed and time-varying
  network topologies,'' \emph{IEEE Transactions on Automatic Control}, vol.~62,
  no.~10, pp. 5365--5371, 2017.

\bibitem{meng2018event}
X.~Meng, L.~Xie, and Y.~C. Soh, ``Event-triggered output regulation of
  heterogeneous multiagent networks,'' \emph{IEEE Transactions on Automatic
  Control}, vol.~63, no.~12, pp. 4429--4434, 2018.

\bibitem{cheng2019coordinated}
B.~Cheng and Z.~Li, ``Coordinated tracking control with asynchronous edge-based
  event-triggered communications,'' \emph{IEEE Transactions on Automatic
  Control}, vol.~64, no.~10, pp. 4321--4328, 2019.

\bibitem{heemels2012periodic}
W.~H. Heemels, M.~Donkers, and A.~R. Teel, ``Periodic event-triggered control
  for linear systems,'' \emph{IEEE Transactions on Automatic Control}, vol.~58,
  no.~4, pp. 847--861, 2012.

\bibitem{guo2014distributed}
G.~Guo, L.~Ding, and Q.-L. Han, ``A distributed event-triggered transmission
  strategy for sampled-data consensus of multi-agent systems,''
  \emph{Automatica}, vol.~50, no.~5, pp. 1489--1496, 2014.

\bibitem{garcia2016periodic}
E.~Garcia, Y.~Cao, and D.~W. Casbeer, ``Periodic event-triggered
  synchronization of linear multi-agent systems with communication delays,''
  \emph{IEEE Transactions on Automatic Control}, vol.~62, no.~1, pp. 366--371,
  2016.

\bibitem{zheng2020periodic}
S.~Zheng, P.~Shi, R.~K. Agarwal, and C.~P. Lim, ``Periodic event-triggered
  output regulation for linear multi-agent systems,'' \emph{Automatica}, vol.
  122, p. 109223, 2020.

\bibitem{liu2015event}
T.~Liu and Z.-P. Jiang, ``Event-based control of nonlinear systems with partial
  state and output feedback,'' \emph{Automatica}, vol.~53, pp. 10--22, 2015.

\bibitem{middleton1995non}
R.~Middleton and J.~Xie, ``Non-pathological sampling for high order generalised
  sampled-data hold functions,'' in \emph{Proceedings of 1995 American Control
  Conference-ACC'95}, vol.~2.\hskip 1em plus 0.5em minus 0.4em\relax IEEE,
  1995, pp. 1498--1502.

\bibitem{middleton1990digital}
R.~H. Middleton and G.~C. Goodwin, \emph{Digital Control and Estimation: A
  Unified Approach}.\hskip 1em plus 0.5em minus 0.4em\relax Prentice Hall
  Professional Technical Reference, 1990.

\bibitem{castillo1997regulation}
B.~Castillo, S.~Di~Gennaro, S.~Monaco, and D.~Normand-Cyrot, ``On regulation
  under sampling,'' \emph{IEEE Transactions on Automatic Control}, vol.~42,
  no.~6, pp. 864--868, 1997.

\bibitem{lawrence2001output}
D.~A. Lawrence and E.~A. Medina, ``Output regulation for linear systems with
  sampled measurements,'' in \emph{Proceedings of the 2001 American Control
  Conference.(Cat. No. 01CH37148)}, vol.~3.\hskip 1em plus 0.5em minus
  0.4em\relax IEEE, 2001, pp. 2044--2049.

\bibitem{wang2021robust}
L.~Wang, L.~Marconi, and C.~M. Kellett, ``Robust implementable regulator design
  of linear systems with non-vanishing measurements,'' \emph{Automatica}, vol.
  143, p. 110418, 2022.

\bibitem{yu2011second}
W.~Yu, W.~X. Zheng, G.~Chen, W.~Ren, and J.~Cao, ``Second-order consensus in
  multi-agent dynamical systems with sampled position data,''
  \emph{Automatica}, vol.~47, no.~7, pp. 1496--1503, 2011.

\bibitem{huang2016some}
N.~Huang, Z.~Duan, and G.~R. Chen, ``Some necessary and sufficient conditions
  for consensus of second-order multi-agent systems with sampled position
  data,'' \emph{Automatica}, vol.~63, pp. 148--155, 2016.

\bibitem{liu2019leader}
W.~Liu and J.~Huang, ``Leader-following consensus for linear multiagent systems
  via asynchronous sampled-data control,'' \emph{IEEE Transactions on Automatic
  Control}, vol.~65, no.~7, pp. 3215--3222, 2019.

\bibitem{postoyan2019inter}
R.~Postoyan, R.~G. Sanfelice, and W.~M.~H. Heemels, ``Inter-event times
  analysis for planar linear event-triggered controlled systems,'' in
  \emph{2019 IEEE 58th Conference on Decision and Control (CDC)}.\hskip 1em
  plus 0.5em minus 0.4em\relax IEEE, 2019, pp. 1662--1667.

\bibitem{yan2018new}
F.~Yan, G.~Gu, and X.~Chen, ``A new approach to cooperative output regulation
  for heterogeneous multi-agent systems,'' \emph{SIAM Journal on Control and
  Optimization}, vol.~56, no.~3, pp. 2074--2094, 2018.

\bibitem{godsil2013algebraic}
C.~Godsil and G.~F. Royle, \emph{Algebraic graph theory}.\hskip 1em plus 0.5em
  minus 0.4em\relax Springer Science \& Business Media, 2013, vol. 207.

\bibitem{huang2016cooperative}
J.~Huang, ``The cooperative output regulation problem of discrete-time linear
  multi-agent systems by the adaptive distributed observer,'' \emph{IEEE
  Transactions on Automatic Control}, vol.~62, no.~4, pp. 1979--1984, 2016.

\bibitem{cao2015event}
M.~Cao, F.~Xiao, and L.~Wang, ``Event-based second-order consensus control for
  multi-agent systems via synchronous periodic event detection,'' \emph{IEEE
  Transactions on Automatic Control}, vol.~60, no.~9, pp. 2452--2457, 2015.

\bibitem{wang2020adaptive}
S.~Wang, H.~Zhang, and Z.~Chen, ``Adaptive cooperative tracking and parameter
  estimation of an uncertain leader over general directed graphs,'' \emph{IEEE
  Transactions on Automatic Control}, DOI: 10.1109/TAC.2020.3046215, 2022.

\bibitem{yan2017cooperative}
Y.~Yan and J.~Huang, ``Cooperative output regulation of discrete-time linear
  time-delay multi-agent systems under switching network,''
  \emph{Neurocomputing}, vol. 241, pp. 108--114, 2017.

\end{thebibliography}
\end{document}